\newcommand{\NP}{{\sf NP}}
\newcommand{\problemdef}[3]{
	\begin{center}
		\begin{boxedminipage}{.99\textwidth}
			\textsc{{#1}}\\[2pt]
			\begin{tabular}{ r p{0.8\textwidth}}
				\textit{~~~~Instance:} & {#2}\\
				\textit{Question:} & {#3}
			\end{tabular}
		\end{boxedminipage}
	\end{center}
}
\begin{document}

\title{Filling the Complexity Gaps for Colouring\\ Planar and Bounded Degree Graphs\thanks{An extended abstract of this paper 
appeared in the proceedings of the 26th International Workshop on Combinatorial Algorithms (IWOCA 2015)~\cite{DDJP15}.
The research in this paper was supported by EPSRC (EP/K025090/1) and the Leverhulme Trust (RPG-2016-258).}}
\author{Konrad K. Dabrowski\inst{1}, Fran\c{c}ois~Dross\inst{2}, Matthew~Johnson\inst{1} \and Dani\"el~Paulusma\inst{1}}
\institute{Department of Computer Science, Durham University,\\ Lower Mountjoy, South Road, Durham DH1 3LE, United Kingdom
\texttt{\{konrad.dabrowski,matthew.johnson2,daniel.paulusma\}@durham.ac.uk}
\and 
Universit\'e C\^ote d'Azur, I3S, CNRS, Inria, France
\texttt{francois.dross@inria.fr}
}
\maketitle

\begin{abstract}
A colouring of a graph $G=(V,E)$ is a function $c: V\rightarrow\{1,2,\ldots \}$ such that $c(u)\neq c(v)$ for every $uv\in E$. A $k$-regular list assignment of~$G$ is a function~$L$ with domain~$V$ such that for every $u\in V$, $L(u)$ is a subset of $\{1, 2, \dots\}$ of size~$k$. A colouring~$c$ of~$G$ respects a $k$-regular list assignment~$L$ of~$G$ if $c(u)\in L(u)$ for every $u\in V$.
A graph~$G$ is $k$-choosable if for every $k$-regular list assignment~$L$ of~$G$, there exists a colouring of~$G$ that respects~$L$. We may also ask if for a given $k$-regular list assignment~$L$ of a given graph~$G$, there exists a colouring of~$G$ that respects~$L$. This yields the {$k$-{\sc Regular List Colouring}} problem. For $k\in \{3,4\}$ we determine a family of classes~${\cal G}$ of planar graphs, such that either $k$-{\sc Regular List Colouring} is \NP-complete for instances $(G,L)$ with $G\in {\cal G}$, or every $G\in {\cal G}$ is $k$-choosable. By using known examples of non-$3$-choosable and non-$4$-choosable graphs, this enables us to classify the complexity of $k$-{\sc Regular List Colouring} restricted to planar graphs, planar bipartite graphs, planar triangle-free graphs and to planar graphs with no $4$-cycles and no $5$-cycles. We also classify the complexity of $k$-{\sc Regular List Colouring} and a number of related colouring problems for graphs with bounded maximum degree.
\keywords{list colouring, choosability, planar graphs, maximum degree}
\end{abstract}

\section{Introduction}

A \emph{colouring} of a graph is a labelling of its vertices such that adjacent vertices do not have the same label.
We call these labels \emph{colours}.
Graph colouring problems are central to the study of combinatorial algorithms and they have many theoretical and practical applications.
A typical problem asks whether a colouring exists under certain constraints, or how difficult it is to find such a colouring.
For example, in the {\sc List Colouring} problem, a graph is given where each vertex has a list of colours and one wants to know if the vertices can be coloured using only colours in their lists.
The {\sc Choosability} problem asks whether such list colourings are guaranteed to exist whenever all the lists have a certain size~$k$ 
(if so, then the graph is said to be $k$-choosable).
In fact, an enormous variety of colouring problems can be defined and there is now a vast collection of literature on this subject.
For longer introductions to the type of problems we consider we refer to two recent surveys~\cite{C14,GJPS}.

In this paper, we are concerned with the computational complexity of colouring problems (we give formal definitions of these problems in Section~\ref{s-terminology}).
For many such problems, the complexity is well understood in the case where we allow every graph as input, so it is natural to consider problems with restricted inputs in order to increase our understanding of their hardness.
Our two main objectives are related to this aim.
They are:
\begin{enumerate}[1.]
\item to exploit structural results in order to obtain complexity results;
\item to fill a number of complexity gaps in order to obtain complete complexity classifications.
\end{enumerate}
The graph classes we consider are the classes of planar graphs and of graphs with bounded maximum degree.
As such our paper consists of two main parts. 

In the first (Section~\ref{s-planar}), we study planar graphs and consider a natural variant of the {\sc List Colouring} problem, closely related to the {\sc Choosability} problem.
As we will discuss later in more detail, there exist many results for the latter problem restricted to planar graphs (see for example~\cite{AT92,CMR12,DLS09,Gu96,KT94,LXL99,Mo06,MRW06,Th94,Th95,Vo93,Vo95,Vo07,WLC10,WLC11,WWW08}).
Most of these results are structural.
They either show that every graph of some subclass~${\cal G}$ of planar graphs is $k$-choosable for some small value of~$k$, or construct a concrete example of a non-$k$-choosable graph that belongs to~${\cal G}$. 
When not every graph in~${\cal G}$ is $k$-choosable, it is natural to ask if the vertices of a given graph from~${\cal G}$ can be coloured in polynomial time using only colours from their lists for some given list assignment~$L$, which assigns a list of size~$k$ to each vertex.
For $k\in \{3,4\}$, we prove two general theorems, which give us a family of subclasses~${\cal G}$ of planar graphs, such that either this problem is \NP-complete for instances $(G,L)$ with $G\in {\cal G}$ or every graph in~${\cal G}$ is $k$-choosable.
We will then combine known structural choosability results with these theorems to obtain \NP-hardness results for this problem on planar graphs and on a number of subclasses of planar graphs.
This enables us to fill some complexity gaps in order to obtain a complete classification of the computational complexity of the problem for these graph classes. 

Some of the known results we use in the first part of our paper are for planar graphs with bounded degree.
In the second part of the paper (Section~\ref{s-bounded}), we combine these and other old and new results to fill some more complexity gaps and obtain complete complexity classifications for a number of colouring problems on graphs with bounded maximum degree.

We first introduce necessary definitions and terminology in Section~\ref{s-terminology}. 

\section{Preliminaries}\label{s-terminology}

A {\em colouring} of a graph $G=(V,E)$ is a function $c: V\rightarrow\{1,2,\ldots \}$ such
that $c(u)\neq c(v)$ whenever $uv\in E$.
We say that~$c(u)$ is the {\it colour} of~$u$.
For a positive integer~$k$,
if $1\leq c(u)\leq k$ for all $u\in V$, then~$c$ is a {\em $k$-colouring} of~$G$.
We say that~$G$ is {\it $k$-colourable} if a $k$-colouring of~$G$ exists.
The \textsc{Colouring} problem is to decide whether a graph~$G$ is $k$-colourable for some given integer~$k$.
If~$k$ is fixed, that is, not part of the input, we obtain the $k$-{\sc Colouring} problem.

A {\it list assignment} of a graph $G=(V,E)$ is a function~$L$ with domain~$V$ such that
for each vertex $u\in V$, $L(u)$ is a subset of $\{1, 2, \dots\}$.
This set is called the {\it list} of {\it admissible} colours for~$u$.
If $L(u)\subseteq \{1,\ldots,k\}$ for each $u\in V$, then~$L$ is a \emph{$k$-list assignment}.
The {\it size} of a list assignment~$L$ is the maximum list size~$|L(u)|$ over all vertices $u\in V$.
A colouring~$c$
{\it respects}~${L}$ if $c(u)\in L(u)$ for all $u\in V$.
Given a graph~$G$ with a $k$-list assignment~$L$, the {\sc List Colouring} problem is to decide whether~$G$
has a colouring that respects~$L$.
If~$k$ is fixed, then we have the {\sc List $k$-Colouring} problem.
Fixing the size of~$L$ to be at most~$\ell$ gives the $\ell$-{\sc List Colouring} problem.
We say that a list assignment~$L$ of a graph $G=(V,E)$ is {\it $\ell$-regular} if, for all $u\in V$, $L(u)$ contains exactly~$\ell$ colours.
This gives us the following problem, which is one focus of this paper.
It is defined for each integer $\ell\geq 1$ (note that~$\ell$ is {\it fixed}, that is, $\ell$ is not part of the input).

\problemdef{\sc $\ell$-Regular List Colouring}{a graph~$G$ with an $\ell$-regular list assignment~$L$.}
{does~$G$ have a colouring that respects~$L$?}

\noindent
A {\it $k$-precolouring} of a graph $G=(V,E)$ is a function $c_W :W\rightarrow\{1,2,\ldots,k\}$ for some subset $W\subseteq V$.
A $k$-colouring~$c$ of~$G$ is an {\it extension} of a $k$-precolouring~$c_W$ of~$G$ if $c(v)=c_W(v)$ for each $v \in W$.
Given a graph~$G$ with a precolouring~$c_W$, the {\sc Precolouring Extension} problem is to decide whether~$G$ has a $k$-colouring that extends~$c_W$.
If~$k$ is fixed, we obtain the $k$-{\sc Precolouring Extension} problem.
The relationships amongst
the problems introduced are shown in \figurename~\ref{f-col}.

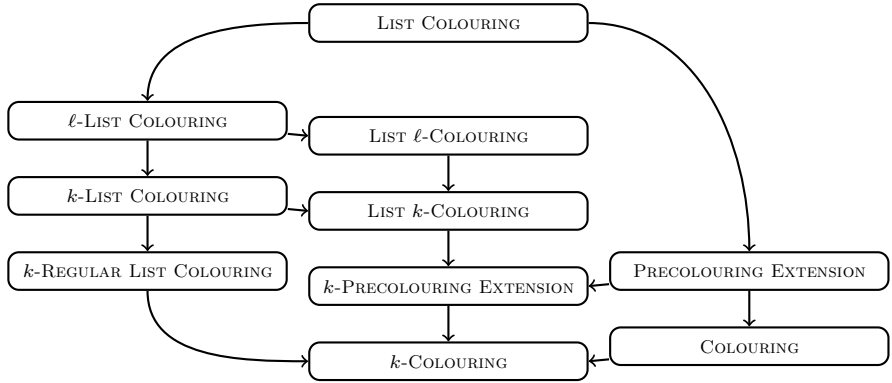
\begin{figure}[ht]
\begin{center}

\begin{tikzpicture}[problem/.style={draw,rectangle,thick,rounded corners,minimum width=3.7cm,minimum height=0.5cm,font=\scriptsize}]

\node[problem] (kregular) at (0,3) {{\sc $k$-Regular List Colouring}};
\node[problem] (llistcolouring) at (0,5) {{\sc $\ell$-List Colouring}};
\node[problem] (klistcolouring) at (0,4) {{\sc $k$-List Colouring}};
\node[problem] (listcolouring) at (4,6.3) {{\sc List Colouring}};
\node[problem] (listlcolouring) at (4,4.8) {{\sc List $\ell$-Colouring}};
\node[problem] (listkcolouring) at (4,3.8) {{\sc List $k$-Colouring}};
\node[problem] (kprecolouring) at (4,2.8) {{\sc $k$-Precolouring Extension}};
\node[problem] (kcolouring) at (4,1.8) {{\sc $k$-Colouring}};
\node[problem] (precolouring) at (8,3) {{\sc Precolouring Extension}};
\node[problem] (colouring) at (8,2) {{\sc Colouring}};

\draw[->,thick] (listcolouring) to[out=180, in=90] (llistcolouring.north);
\draw[->,thick] (listcolouring) to[out=0, in=90] (precolouring.north);
\draw[->,thick] (llistcolouring) -- (listlcolouring.west);
\draw[->,thick] (klistcolouring) -- (kregular);
\draw[->,thick] (llistcolouring) -- (klistcolouring);
\draw[->,thick] (listlcolouring) -- (listkcolouring);
\draw[->,thick] (klistcolouring) -- (listkcolouring.west);
\draw[->,thick] (precolouring) -- (colouring);
\draw[->,thick] (listkcolouring) -- (kprecolouring);
\draw[->,thick] (kprecolouring) -- (kcolouring);
\draw[->,thick] (precolouring) -- (kprecolouring.east);
\draw[->,thick] (kregular.south) to[out=270, in=180] (kcolouring.west);
\draw[->,thick] (colouring) -- (kcolouring.east);
\end{tikzpicture}
\end{center}
\caption{\label{f-col} Relationships between {\sc Colouring} and its variants.
An arrow from one problem to another indicates that the latter is (equivalent to) a special case of the former; $k$ and~$\ell$ are any two arbitrary integers for which $\ell\geq k$.
For instance, $k$-{\sc Colouring} is a special case of $k$-{\sc Regular List Colouring}. This can be seen by giving the list $L(u)=\{1,\ldots,k\}$ to each vertex~$u$ in an instance graph of {\sc Colouring}.
We recall that in the {\sc $\ell$-List Colouring} problem, the size of the lists is {\em at most}~$\ell$.
We also observe that {\sc $\ell$-Regular List Colouring} and {\sc $k$-Regular List Colouring} are not comparable for any $k \neq \ell$.}
\end{figure}

For an integer $\ell\geq 1$, a graph $G=(V,E)$ is $\ell$-{\em choosable} if, for every $\ell$-regular list assignment~$L$ of~$G$,
there exists a colouring that respects~$L$.
The corresponding decision problem is the {\sc Choosability} problem.
If~$\ell$ is fixed, we obtain the $\ell$-{\sc Choosability} problem.

We emphasize that {\sc $\ell$-Regular List Colouring} and {\sc $\ell$-Choosability} are two fundamentally different problems.
For the former we must decide whether there exists a colouring that respects a {\it particular} $\ell$-regular list assignment.
For the latter we must decide whether or not {\em every} $\ell$-regular list assignment has a colouring that respects it.
As we will see later, this difference also becomes clear from a complexity point of view: for some graph classes {\sc $\ell$-Regular List Colouring} is computationally easier than {\sc $\ell$-Choosability}, whereas, perhaps more surprisingly, for other graph classes, the reverse holds.

For two vertex-disjoint graphs~$G$ and~$H$, we let $G+H$ denote the disjoint union $(V(G)\cup V(H), E(G)\cup E(H))$, and~$kG$ denote the disjoint union of~$k$ copies of~$G$.
If~$G$ is a graph containing an edge~$e$ or a vertex~$v$ then $G-e$ and $G-v$ denote the graphs obtained from~$G$ by deleting~$e$ or~$v$, respectively.
If~$G'$ is a subgraph of~$G$ then $G-G'$ denotes the graph with vertex set~$V(G)$ and edge set $E(G) \setminus E(G')$.

We let~$C_n$, $K_n$ and~$P_n$ denote the cycle, complete graph and path on~$n$ vertices, respectively.
A wheel is a cycle with an extra vertex added that is adjacent to all other vertices.
The wheel on~$n$ vertices is denoted~$W_n$; note that $W_4=K_4$.
A graph on at least three vertices is {\it $2$-connected} if it is connected and there is no vertex whose removal disconnects it.

For a set of graphs~${\cal H}$, a graph $G$ is {\it ${\cal H}$-free} if $G$ contains no induced subgraph isomorphic to a graph in~${\cal H}$, whereas $G$ is {\it ${\cal H}$-subgraph-free} if it contains no subgraph isomorphic to a graph in~${\cal H}$.
The {\it girth} of a graph is the length of its shortest cycle.

\section{Planar Graphs}\label{s-planar}

\subsection{Known Results for Planar Graphs}\label{s-known}
We start with a classical result observed by Erd{\H{o}}s~et~al.~\cite{ERT79} and Vizing~\cite{Vi79}.

\begin{theorem}[\cite{ERT79,Vi79}]\label{t-old}
$2$-{\sc List Colouring} is polynomial-time solvable.
\end{theorem}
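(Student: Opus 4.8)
The plan is to reduce \textsc{2-List Colouring} to \textsc{2-Satisfiability} (\textsc{2-Sat}), which is solvable in linear time. Let $G=(V,E)$ be a graph with a list assignment~$L$ of size at most~$2$. First I would deal with small lists: if some vertex has an empty list, then no colouring respecting~$L$ exists and we output \textsc{no}; if some vertex has a list of size exactly~$1$, its colour is forced, and we repeatedly propagate such forced colours, deleting a forced colour from the lists of neighbours, until every remaining list has size exactly~$2$. This preprocessing runs in polynomial time and either detects that no valid colouring exists (a list becomes empty, or two adjacent vertices are forced to the same colour) or produces an equivalent instance in which $L$ is $2$-regular. (Alternatively, one can skip the propagation and encode a forced colour by a unit clause, since a \textsc{2-Sat} instance may contain clauses with fewer than two literals.)

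So assume every vertex~$u$ has $L(u)=\{a_u,b_u\}$ with $a_u\neq b_u$. For each $u\in V$ introduce a Boolean variable~$x_u$, with the interpretation that $x_u$ is \textsc{true} precisely when~$u$ receives colour~$a_u$ and \textsc{false} when it receives colour~$b_u$; thus the event ``$u$ has colour~$\gamma$'' is represented by the literal $x_u$ if $\gamma=a_u$, by $\overline{x_u}$ if $\gamma=b_u$, and is simply impossible if $\gamma\notin L(u)$. Now consider an edge $uv\in E$. A colouring violates the constraint at~$uv$ exactly when~$u$ and~$v$ both receive some common colour $\gamma\in L(u)\cap L(v)$, and there are at most two such colours~$\gamma$. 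For each $\gamma\in L(u)\cap L(v)$ we add the clause forbidding ``$u$ has colour~$\gamma$ and $v$ has colour~$\gamma$'', namely the disjunction of the two literals negating these two events. Each such clause has at most two literals, so we obtain a \textsc{2-Sat} formula~$\varphi$ with at most $2|E|$ clauses, constructed in time polynomial in the size of~$G$.

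It remains to check correctness. Any truth assignment satisfying~$\varphi$ assigns to every vertex~$u$ a colour from~$L(u)$ via the interpretation above, and the clauses guarantee that no edge is monochromatic, so this is a colouring respecting~$L$; conversely, any colouring respecting~$L$ induces a satisfying truth assignment of~$\varphi$. Hence~$G$ has a colouring respecting~$L$ if and only if~$\varphi$ is satisfiable, and we finish by invoking a known linear-time algorithm for \textsc{2-Sat}.

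The approach is essentially standard, so I do not expect a genuine conceptual obstacle; the only point requiring care is the handling of lists of size~$0$ and~$1$, since the problem as stated bounds list sizes from above by~$2$ rather than fixing them. Once the preprocessing (or the unit-clause encoding) has reduced to the case where every list has size exactly~$2$, the translation to \textsc{2-Sat} and its correctness proof are routine.
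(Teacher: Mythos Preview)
Your reduction to \textsc{2-Sat} is correct and complete: the preprocessing for lists of size~$0$ or~$1$ is handled properly, the encoding of each edge constraint by at most two binary clauses is sound, and the equivalence between satisfying assignments and colourings respecting~$L$ is established. There is no gap.

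As for comparison: the paper does not actually give a proof of this statement. Theorem~\ref{t-old} is quoted as a classical result of Erd\H{o}s--Rubin--Taylor and Vizing, with no argument supplied. Your \textsc{2-Sat} reduction is the standard modern route to the algorithmic claim and is arguably the most direct; the original papers proceed somewhat differently (Erd\H{o}s--Rubin--Taylor, for instance, give a structural characterisation of $2$-choosable graphs), but for the purpose of showing polynomial-time solvability your approach is both correct and efficient.
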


Garey~et~al. proved the following result, which is in contrast to
the fact that every planar graph is $4$-colourable by the Four Colour Theorem~\cite{AH89}.

\begin{theorem}[\cite{GJS74}]\label{t-gjs}
{\sc $3$-Colouring} is \NP-complete for planar graphs
of maximum degree~$4$.
\end{theorem}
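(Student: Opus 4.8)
Membership in \NP{} is immediate, since a $3$-colouring is a polynomial-time verifiable certificate; this uses neither planarity nor the degree bound. For \NP-hardness I would reduce from \textsc{$3$-Satisfiability}, following the strategy of Garey, Johnson and Stockmeyer. Given an instance~$\varphi$ of \textsc{$3$-Satisfiability} (a CNF formula whose clauses have size~$3$), the aim is to build in polynomial time a planar graph~$G_\varphi$ of maximum degree at most~$4$ that is $3$-colourable if and only if~$\varphi$ is satisfiable, so that both the planarity and the degree restriction are built into the output.

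The skeleton of the construction is standard: reserve one colour as a ``ground'' colour and encode the two truth values by the other two colours. One uses small \emph{truth-setting components} (one per variable), whose $3$-colourings correspond to the truth assignments; \emph{wire components} that propagate a truth value, realised as paths whose vertices are tied to nearby ground vertices so that only the two ``truth'' colours remain available, consecutive vertices being forced to differ, with the parity corrected by inserting an extra vertex where necessary; and \emph{clause components} (one per clause), which are $3$-colourable exactly when the three incoming literals do not all read ``false''. One then draws the resulting ``formula graph'' in the plane, and the only remaining problem is that wires cross. Each crossing is removed by splicing in a fixed planar \emph{crossing component} of maximum degree at most~$4$: a gadget that, once its four attachment points are wired into the two wires meeting at the crossing, lets each wire pass its truth value faithfully through to the other side while the two wires do not interfere. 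Since every one of these components can be built with maximum degree at most~$4$, the graph~$G_\varphi$ is planar, of maximum degree at most~$4$, of size polynomial in~$|\varphi|$, and $3$-colourable precisely when~$\varphi$ is satisfiable.

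The main obstacle is the crossing component. One cannot naively route two wires across one another in the plane: intuitively, a subgraph that forces the two ends of one wire to carry the same value tends to separate the two ends of the other wire, so both wires cannot be transmitted by simple ``equality paths''. The correct gadget is therefore a genuinely composite object; one natural way to obtain it is as a planar version of the XOR-crossover construction, combining a few copies of a small planar gadget implementing a suitable two-valued operation, and in any case both its correctness and the fact that it can be realised within maximum degree~$4$ have to be checked by an explicit finite case analysis. This is the technical heart of the theorem. The remaining points — the parity bookkeeping along wires so that each clause receives exactly the intended literals, and the verification that the truth-setting and clause components behave as claimed — are routine.

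Finally, the bound~$4$ is best possible: by Brooks' theorem a connected graph of maximum degree at most~$3$ is $3$-colourable unless it is~$K_4$, so \textsc{$3$-Colouring} is polynomial-time solvable on graphs of maximum degree at most~$3$.
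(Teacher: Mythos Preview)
The paper does not prove this statement at all: Theorem~\ref{t-gjs} is quoted from Garey, Johnson and Stockmeyer~\cite{GJS74} as a known result and is used throughout only as a black box. There is therefore no ``paper's own proof'' to compare against.

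Your sketch is a faithful high-level outline of the original GJS argument: reduce from \textsc{$3$-Satisfiability}, build variable, wire and clause gadgets over three colours, and eliminate crossings with a planar crossover component of bounded degree. You correctly flag the crossover gadget as the crux and note that its correctness requires a finite case analysis. The closing remark about Brooks' theorem and the optimality of the bound~$4$ is accurate but not part of the theorem statement. As a \emph{proposal} this is fine; as a \emph{proof} it is not yet one, since none of the gadgets are actually exhibited, and in particular the existence of a degree-$\leq 4$ planar crossover with the required transmission property is precisely the nontrivial content you would need to supply (either by reproducing the GJS gadget or an equivalent). If you intend to include a self-contained proof rather than a citation, you must write down the gadgets explicitly and verify their behaviour.
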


Next we present results found by several authors on the existence of $k$-choosable graphs for various graph classes.

\begin{theorem} \label{t-prev-results}
The following statements hold for $k$-choosability:
\mbox{}
\begin{enumerate}[(i)]
\renewcommand{\theenumi}{\thetheorem.(\roman{enumi})}
\renewcommand{\labelenumi}{(\roman{enumi})}
\item Every planar graph is $5$-choosable \emph{\cite{Th94}}\label{t-th}.
\item There exists a planar graph that is not $4$-choosable \emph{\cite{Vo93}}\label{t-voigt}.
\item Every planar triangle-free graph is $4$-choosable \emph{\cite{KT94}}\label{t-kt}.
\item Every planar graph with no $4$-cycles is $4$-choosable \emph{\cite{LXL99}}\label{t-lss}.
\item There exists a planar triangle-free graph that is not $3$-choosable \emph{\cite{Vo95}}\label{t-voigt2}.
\item There exists a planar graph with no $4$-cycles, no $5$-cycles and no intersecting triangles that is not $3$-choosable \emph{\cite{MRW06}}\label{t-mrw}.
\item Every planar bipartite graph is $3$-choosable \emph{\cite{AT92}}\label{t-at}.
\end{enumerate}

\end{theorem}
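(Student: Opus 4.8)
The plan is to obtain each of the seven statements from the cited sources; below I indicate how one would reconstruct the arguments. The positive results (i), (iii), (iv) and (vii) all exploit the sparsity of planar graphs via Euler's formula. For (i) I would follow Thomassen's induction, strengthening the claim to the following: every plane graph whose outer face is bounded by a cycle~$C$, in which two adjacent vertices of~$C$ are precoloured with distinct colours, every other vertex of~$C$ has a list of size at least~$3$, and every interior vertex has a list of size~$5$, has a colouring respecting its lists. One then peels off the outer face: if~$C$ has a chord, split along it and apply induction twice; otherwise pick an outer vertex~$v$ other than the two precoloured ones, delete~$v$, remove (up to) two colours of~$L(v)$ from the lists of the interior neighbours of~$v$, and apply induction to the still near-triangulated remainder. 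For (iii) and (iv) I would run a discharging argument on a vertex-minimal counterexample~$G$ to $4$-choosability: show~$G$ is $2$-connected with minimum degree at least~$4$ and no short separating cycles, assign charge $d(v)-4$ to each vertex and $d(f)-4$ to each face (total charge $-8$ by Euler), then design discharging rules that, using the hypothesis (triangle-free, resp. no $4$-cycles), leave every charge nonnegative --- a contradiction. For (vii) the cleanest route is the Alon--Tarsi polynomial method: a planar bipartite graph admits an orientation with maximum out-degree at most~$2$ in which the numbers of even and odd spanning Eulerian subdigraphs differ, and by the Combinatorial Nullstellensatz this certifies $3$-choosability.

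For the negative results (ii), (v) and (vi) the task is instead to exhibit explicit planar graphs together with list assignments admitting no respecting colouring. Here I would reproduce the gadget constructions of the cited papers: for (ii), Voigt's planar graph built from small ``blocker'' gadgets, each forcing a constraint between the colours of its boundary vertices, wired together so that no global colour choice survives; for (v), Voigt's triangle-free planar example with $3$-lists; and for (vi), the Montassier--Raspaud--Wang construction, which additionally avoids $4$-cycles, $5$-cycles and intersecting triangles. In each case the verification reduces to a finite case analysis showing that every colour assigned to the designated vertices propagates to an empty list somewhere in the graph.

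The main obstacle is twofold. On the positive side, the discharging proofs for (iii) and (iv) hinge on choosing the right reducible configurations and a rule set that discharges \emph{all} the excess; making the rules interact correctly with the forbidden-cycle hypotheses is the delicate point. On the negative side, the real difficulty lies in the designs behind (ii), (v) and (vi): one needs small gadgets whose list colourings realise prescribed Boolean-like behaviour, together with a global planar wiring making the constraints jointly unsatisfiable while respecting the relevant girth restrictions. Since all seven statements are already established in the literature, for the purposes of this paper it suffices to cite \cite{Th94,Vo93,KT94,LXL99,Vo95,MRW06,AT92}; the sketches above merely record how each would be proved from scratch.
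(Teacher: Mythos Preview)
Your proposal is correct and matches the paper's treatment: the paper does not prove Theorem~\ref{t-prev-results} at all but simply collects the seven statements with citations to \cite{Th94,Vo93,KT94,LXL99,Vo95,MRW06,AT92}, exactly as you note in your final paragraph. Your additional sketches are accurate bonuses; the only place you work harder than necessary is~(iii), where a triangle-free planar graph is $3$-degenerate by Euler's formula (so a minimal counterexample with $\delta\ge 4$ is already impossible) and no discharging rules are needed.
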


We note that smaller examples of graphs than were used in the original proofs have been found for Theorems~\ref{t-voigt}~\cite{Gu96}, \ref{t-voigt2}~\cite{Mo06} and \ref{t-mrw}~\cite{WWW08} and that Theorem~\ref{t-mrw} strengthens a result of Voigt~\cite{Vo07}.
We recall that Thomassen~\cite{Th95} first showed that every planar graph of girth at least~$5$ is $3$-choosable, and that a number of results have since been obtained on $3$-choosability of planar graphs in which certain cycles are forbidden; see, for example,~\cite{CMR12,DLS09,WLC10,WLC11}.

We will also use the following result of Chleb\'ik and Chleb\'ikov\'a.

\begin{theorem}[\cite{CC06}]\label{t-cc}
{\sc List Colouring} is \NP-complete for $3$-regular planar bipartite graphs that have
a list assignment in which each list is one of $\{1,2\}$, $\{1,3\}$, $\{2,3\}$, $\{1,2,3\}$ and
all the neighbours of each vertex with three colours in its list have two colours in their lists.
\end{theorem}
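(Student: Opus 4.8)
The plan is to establish \NP-completeness of this restricted version of {\sc List Colouring} by a reduction from a known \NP-hard problem, most naturally {\sc $3$-Colouring} restricted to planar graphs of bounded degree (Theorem~\ref{t-gjs}) or from planar monotone {\sc $3$-SAT}, carefully engineering the gadgets so that the resulting list-colouring instance simultaneously satisfies all four structural constraints: the graph is planar, bipartite, $3$-regular, and every vertex with list $\{1,2,3\}$ has only neighbours with $2$-element lists. First I would observe that a vertex $v$ with list $\{1,2\}$ behaves like a boolean variable (say colour $1$ means \textsf{true}, colour $2$ means \textsf{false}), and that the asymmetric lists $\{1,3\}$ and $\{2,3\}$ together with the ``full'' list $\{1,2,3\}$ can be used to build constraint gadgets; colour $3$ plays the role of an auxiliary value that transmits information. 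The key design goal is a \emph{copy gadget} that forces two $\{1,2\}$-vertices to receive the same colour (or opposite colours), and a \emph{clause gadget} that is list-colourable precisely when at least one of three incoming literals is \textsf{true}.

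The main steps, in order, would be: (1) pick the source problem and its planar bounded-degree restriction; (2) replace each variable by a long cycle of $\{1,2\}$-vertices interleaved appropriately so that a consistent value propagates, using the bipartiteness of an even cycle to our advantage and using $\{1,3\}$/$\{2,3\}$-vertices at the junctions where branches split off to feed different clauses; (3) build the clause gadget as a small planar bipartite graph all of whose internal ``hub'' vertices carry list $\{1,2,3\}$ while all of their neighbours (the literal-carriers and some padding vertices) carry two-element lists, exactly as the theorem demands; (4) verify that the clause gadget is list-colourable iff not all three literals are \textsf{false}; (5) perform a degree-regularisation pass, attaching pendant gadgets (e.g.\ small trees or cycles with suitable $2$-element lists) to bring every vertex up to degree exactly $3$ without destroying planarity, bipartiteness, or the list structure --- here one must check that the added vertices with list $\{1,2,3\}$, if any are needed, still only see $2$-list neighbours, which may force the padding to be done entirely with $2$-list vertices; (6) argue planarity is preserved throughout by placing gadgets in the faces of a planar embedding of the source instance, and bipartiteness by a parity check on all gadget cycles; (7) conclude the reduction is polynomial and correctness follows.

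The hard part will be step (3) combined with (5): producing a clause gadget that is genuinely planar and bipartite, that correctly implements the OR of three literals, and --- the real pinch point --- that respects the adjacency restriction ``every vertex whose list is $\{1,2,3\}$ has all neighbours with $2$-element lists.'' This rules out chaining two full-list vertices together, so the gadget must be ``bipartite'' in a second sense: the vertices naturally split into a $3$-list side and a $2$-list side with edges only between the sides, and this secondary bipartition must be compatible with the graph-theoretic bipartition and with $3$-regularity. Getting all of these to coexist, while still transmitting enough information to encode a clause, is the delicate combinatorial engineering; I would expect to need a gadget on a handful of vertices, checked by an exhaustive case analysis over the (few) list-colourings, and then to verify the global parity and degree conditions by a short counting argument. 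Once the gadget is in hand, wiring it up and checking the global properties is routine.

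I should also note the alternative, cleaner route: reduce directly from {\sc $3$-Colouring} of planar graphs of maximum degree $4$ (Theorem~\ref{t-gjs}) by giving each vertex the list $\{1,2,3\}$ and then (a) subdividing every edge to destroy triangles and odd cycles, obtaining a bipartite planar graph, and (b) locally replacing each subdivision vertex and each high-degree vertex by a gadget with $2$-element lists so that the only surviving $\{1,2,3\}$-vertices are the (subdivided, hence low-degree) originals, all of whose neighbours end up with $2$-element lists after padding to make the graph $3$-regular. If this direct transformation can be made to work, it is the approach I would present; if the interaction between subdivision, regularisation and the list-adjacency constraint proves too rigid, I would fall back to the gadget-based reduction from planar monotone {\sc $3$-SAT} sketched above.
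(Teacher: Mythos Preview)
The paper does not prove this theorem; it is quoted as a known result from Chleb\'ik and Chleb\'ikov\'a~\cite{CC06} and placed in the ``Known Results'' section without any argument. There is therefore no in-paper proof to compare your proposal against.

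Your sketch is a reasonable outline of how such hardness results are typically established, and the two routes you describe (gadget reduction from planar {\sc $3$-SAT}, or a direct transformation of a planar {\sc $3$-Colouring} instance via subdivision and regularisation) are both standard strategies. You correctly identify the genuine difficulty: simultaneously achieving $3$-regularity, bipartiteness, planarity, and the adjacency constraint that every $\{1,2,3\}$-vertex sees only $2$-list neighbours. However, what you have written is a plan rather than a proof --- the clause gadget, the copy gadget, and the degree-regularisation gadget are all promised but not exhibited, and the entire argument hinges on their existence. If you intend to supply a self-contained proof, you must actually construct these gadgets and verify their properties; otherwise the appropriate course is simply to cite~\cite{CC06}, as the paper does.
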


\subsection{New Results for Planar Graphs}\label{s-our}

Theorems~\ref{t-old}--\ref{t-prev-results} have a number of immediate consequences for the complexity of {\sc $\ell$-Regular List Colouring} when
restricted to planar graphs. For instance, Theorem~\ref{t-gjs} implies that {\sc $3$-Regular List Colouring} is \NP-complete for planar graphs,
whereas Theorem~\ref{t-prev-results}.(i) shows that {\sc $5$-Regular List Colouring} is polynomial-time solvable on this graph class.
In fact, we notice a complexity jump from being \NP-complete to being trivial (that is, the answer is always yes) when~$\ell$ changes from~$3$ to~$5$. 
It is a natural question to determine the complexity for the missing case $\ell=4$.

In this section we settle this missing case and also present  a number of new hardness results for {\sc $\ell$-Regular List Colouring} restricted to various subclasses of planar graphs.  At the end of this section we show how to combine the known results with our new ones to obtain a number of dichotomy results (Corollaries~\ref{c-main1}--\ref{c-main0}). We deduce some of our new results from two more general theorems, namely Theorems~\ref{t-general1} and~\ref{t-general}, which we state below, but which we prove in Sections~\ref{s-proof1} and~\ref{s-proof2}, respectively.

\begin{sloppypar}
\begin{theorem}\label{t-general1}
Let~${\cal H}$ be a finite set of $2$-connected planar graphs.
Then $4$-{\sc Regular List Colouring} is \NP-complete for planar ${\cal H}$-subgraph-free graphs if
there exists a planar ${\cal H}$-subgraph-free graph that is not $4$-choosable.
\end{theorem}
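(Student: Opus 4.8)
The plan is to reduce from the witness of non-$4$-choosability. Suppose $F$ is a planar ${\cal H}$-subgraph-free graph that is not $4$-choosable; fix a $4$-regular list assignment $L^\star$ of $F$ that admits no respecting colouring. This $L^\star$ uses some finite set $C$ of colours. The idea is to build, for an arbitrary instance $G$ of $3$-{\sc Colouring} restricted to planar graphs of maximum degree $4$ (which is \NP-complete by Theorem~\ref{t-gjs}), a planar ${\cal H}$-subgraph-free graph $G'$ together with a $4$-regular list assignment $L'$ such that $G'$ has a colouring respecting $L'$ if and only if $G$ is $3$-colourable. The high-level mechanism: for each vertex $v$ of $G$ we attach a private copy $F_v$ of $F$, identify a carefully chosen vertex $x_v$ of $F_v$ with $v$, and use the list assignment so that $v$ ``inherits'' a constraint. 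The point of the non-$4$-choosable gadget $F$ is that it acts as a forcing device: we want the combined instance on $F_v$ to be colourable precisely when $v$ receives a colour from a designated size-$3$ sublist (mimicking the three colours available in the $3$-{\sc Colouring} instance), and to be uncolourable otherwise.

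Concretely, first I would massage $L^\star$: since $F$ is not $4$-choosable but presumably every proper induced subgraph behaves better (or, failing that, by taking $F$ to be a minimal such example), there is a vertex $x$ and a colour $\alpha$ such that if we shrink $L^\star(x)$ from its four colours to a specific single colour $\alpha$ the instance becomes colourable, while with the full list it is not; more robustly, one shows there is a vertex $x$ and a partition of $L^\star(x)$ into a ``good'' part and a ``bad'' part controlling colourability. The cleanest route is: because $(F,L^\star)$ has no respecting colouring but $F$ minus any vertex does (choose $F$ minimal), deleting $x$ and then trying to extend shows that for each of the four colours $\beta\in L^\star(x)$ the graph $F-x$ with lists $L^\star$ restricted to avoid $\beta$ on neighbours of $x$ is colourable; we can then tune which single colour on $x$ is ``forbidden'' versus ``allowed'' by judicious use of extra pendant colours. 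After this calibration, take three fresh colours, say $1,2,3$, disjoint from $C$; for a vertex $v$ of $G$, let $L'(v)=\{1,2,3,\gamma_v\}$ where $\gamma_v$ is chosen so that the copy $F_v$ (with its internal lists a shifted copy of $L^\star$, and $x_v=v$) is colourable iff $v$ does \emph{not} take the colour $\gamma_v$ — equivalently iff $v\in\{1,2,3\}$. Then edges of $G$ between $u$ and $v$ enforce $c(u)\ne c(v)$ directly, so a respecting colouring of $G'$ restricts to a proper $3$-colouring of $G$ and conversely any proper $3$-colouring of $G$ extends over each $F_v$.

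For the structural conditions: $G'$ is planar because each $F_v$ is planar and is glued to $G$ at the single vertex $v$ (a one-vertex amalgam of planar graphs along a vertex is planar — embed $F_v$ inside a face incident to $v$; since $\Delta(G)\le 4$ this can be arranged), and $G'$ is ${\cal H}$-subgraph-free provided we are careful: each $F_v$ is ${\cal H}$-subgraph-free by hypothesis, $G$ itself can be assumed ${\cal H}$-subgraph-free only if every graph in ${\cal H}$ has minimum degree at least something — this is where $2$-connectedness of the members of ${\cal H}$ is used, since a $2$-connected subgraph cannot use a cut vertex like $v$ to straddle $G$ and an $F_v$, and one checks that no new copy of an $H\in{\cal H}$ is created at the gluing vertices. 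One still must ensure $G$ is ${\cal H}$-subgraph-free to begin with; this is not automatic, so the reduction may instead need to start from $3$-{\sc Colouring} on a subclass, or to first sparsify/subdivide $G$ — but any $H$ in a finite set of $2$-connected planar graphs has bounded size, and subdividing every edge of $G$ sufficiently many times kills all short cycles hence all members of ${\cal H}$, while $3$-colourability of a graph is preserved under even-length subdivision of edges (or, more simply, under replacing each edge by a suitable $3$-colourable gadget), so we can always preprocess $G$ into an ${\cal H}$-subgraph-free planar instance. I expect the main obstacle to be exactly this last point: simultaneously guaranteeing ${\cal H}$-subgraph-freeness of the whole construction (controlling the gluing so no forbidden subgraph is born, and preprocessing $G$ correctly) while preserving both planarity and the logical equivalence with $3$-{\sc Colouring}; the list-calibration of the gadget $F$ via minimality is the second delicate ingredient but is more routine once set up.
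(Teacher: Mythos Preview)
Your overall architecture---take a vertex-minimal non-$4$-choosable planar ${\cal H}$-subgraph-free witness $(F,L^\star)$, identify a chosen vertex $x$ with each vertex $v$ of the input, and give $v$ the list $\{1,2,3,\gamma\}$ for some $\gamma\in L^\star(x)$---does produce a correct forcing gadget: if $c(v)=\gamma$ the gadget is uncolourable (else $F$ itself would be $L^\star$-colourable), while if $c(v)\in\{1,2,3\}$ the gadget reduces to $F-x$, which is colourable by minimality. The genuine gap is the step you yourself flag as the main obstacle, namely preprocessing the planar input $G$ so that it is ${\cal H}$-subgraph-free. Your claim that ``$3$-colourability of a graph is preserved under even-length subdivision of edges'' is false: once every edge has been subdivided at least once, a path of length $\ell\ge 2$ between the former endpoints imposes no constraint whatsoever on their colours in a $3$-colouring (indeed a single subdivision of every edge already yields a bipartite graph), so the reduction collapses to a trivial yes-instance. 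The fallback of ``replacing each edge by a suitable $3$-colourable gadget'' is left entirely unspecified; you would need, for every admissible ${\cal H}$, a planar ${\cal H}$-subgraph-free graph with two terminals that receive distinct colours in every $3$-colouring, and you give no construction of such an object (nor is its existence obvious from the hypothesis that some planar ${\cal H}$-subgraph-free graph fails to be $4$-choosable).

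The paper sidesteps this difficulty by changing the source problem. Instead of {\sc $3$-Colouring}, it reduces from {\sc $3$-List Colouring} restricted to planar graphs of girth exceeding the longest cycle in any member of ${\cal H}$, with every list one of $\{1,2\},\{1,3\},\{2,3\},\{1,2,3\}$; a separate lemma (proved by a girth-boosting path replacement starting from Theorem~\ref{t-cc}) shows this restricted problem is \NP-complete. The girth bound makes the base graph ${\cal H}$-subgraph-free automatically, so no preprocessing is needed. The gadget is also organised differently: the paper deletes an \emph{edge} $uv$ of $F$ with $L^\star(u)\neq L^\star(v)$ (such an edge must exist, for otherwise all lists coincide and the Four Colour Theorem would colour $F$), obtaining $F'=F-uv$ in which every $L^\star$-colouring gives $u$ and $v$ the same colour from a set $T$ with $|T|\le 3$, and then attaches one or several copies of $F'$ to each vertex of $G$ in one of three patterns (depending on $|T|$) so as to strip the padding colours and leave exactly the original size-$2$ or size-$3$ list.
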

\end{sloppypar}

Note that the class of ${\cal H}$-subgraph-free graphs is contained in the class of ${\cal H}$-free graphs. Hence, whenever a problem is \NP-complete for ${\cal H}$-subgraph-free graphs, it is also \NP-complete for ${\cal H}$-free graphs.

An alternative formulation of Theorem~\ref{t-general1} is that, for every finite set~${\cal H}$ of $2$-connected planar graphs,
either every pair $(G,L)$, where~$G$ is a planar ${\cal H}$-subgraph-free graph, is a yes-instance of  $4$-{\sc Regular List Colouring}, or 
$4$-{\sc Regular List Colouring} is \NP-complete for planar ${\cal H}$-subgraph-free graphs.
Results of the same flavour were proved for {\sc $3$-Colouring}, {\sc Acyclic $3$-Colouring}, {\sc $(1,0)$-Colouring} and $C_{2p+1}$-{\sc Colouring} restricted to classes of planar graphs by Esperet~et~al.~\cite{EMOP13}.
To give a more recent example of such a result, Dross, Montassier and Pinlou~\cite{DMP16} proved that either every triangle-free planar graph is near-bipartite -- that is, can be decomposed into an independent set and a forest -- or recognizing near-bipartite 
triangle-free 
planar graphs is \NP-complete. 
By using their construction and the existence of a planar graph that is not near-bipartite (for instance~$K_4$),
Bonamy~et~al.~\cite[see arXiv version]{BDFJP17} observed that the problem of recognizing near-bipartite graphs is \NP-complete on planar graphs.

We will exploit the power of Theorem~\ref{t-general1} by combining it with Theorem~\ref{t-prev-results}.
For instance, combining Theorem~\ref{t-general1} with Theorem~\ref{t-voigt} yields that $4$-{\sc Regular List Colouring} is \NP-complete for planar graphs.
We are not aware of any paper proving this result although it seems to have been known:
the result is mentioned by Thomassen~\cite{Th97} without proof, and Dvo\v{r}\'ak and Thomas~\cite{DT14} mistakenly attribute it to~\cite{Gu96}. 
We can sharpen this result as follows.
By the Four Colour Theorem~\cite{AH89}, every planar graph is $4$-colourable, that is, has a colouring respecting the list assignment that assigns list of colours $\{1,2,3,4\}$ to every vertex of the graph.
In contrast, Voigt and Wirth~\cite{VW97} showed the existence of a planar graph that does not allow a colouring respecting some specific list assignment~$L$, in which each list contains four distinct colours from $\{1,2,3,4,5\}$.
We use their example to prove the following result in Section~\ref{s-proof1}.

\begin{theorem}\label{t-main1}
$4$-{\sc Regular List Colouring} is \NP-complete for planar graphs even if every list contains four colours from $\{1,2,3,4,5\}$.
\end{theorem}

Theorem~\ref{t-general1} has further applications and can also be combined with other results from the literature.  
For instance, consider the non-$4$-choosable planar graph~$H$ from the proof of Theorem~1.7 in~\cite{Gu96}.
It can be observed that~$H$ is $W_p$-subgraph-free for all $p\geq 8$.  Wheels are $2$-connected and planar. Hence, if~${\cal H}$ is any finite set of wheels on at least eight vertices, then $4$-{\sc Regular List Colouring} is \NP-complete for planar ${\cal H}$-subgraph-free graphs.

Our basic idea for proving Theorem~\ref{t-general1} is similar to the proof technique used in~\cite{DMP16,EMOP13}.
We pick a minimal counterexample~$H$ with list assignment~$L$. 
We select an ``appropriate'' edge $e=uv$ and consider the graph $F'=F-e$.
We reduce from an appropriate colouring problem restricted to planar graphs and use copies of~$F'$ as a gadget to ensure that we can enforce a regular list assignment.
The proof of the next theorem also uses this idea.

\begin{sloppypar}
\begin{theorem}\label{t-general}
Let~${\cal H}$ be a finite set of $2$-connected planar graphs.
Then $3$-{\sc Regular List Colouring} is \NP-complete for planar ${\cal H}$-subgraph-free graphs if there exists a planar ${\cal H}$-subgraph-free graph that is not $3$-choosable.
\end{theorem}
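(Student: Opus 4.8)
The plan is to adapt the strategy behind Theorem~\ref{t-general1}: reduce from an \NP-complete colouring problem on planar graphs (naturally Theorem~\ref{t-cc} here, with Theorem~\ref{t-gjs} handling a degenerate case) and use copies of $H-e$, for a minimal counterexample $H$ and a well-chosen edge $e=uv$, to ``pad'' lists up to size exactly~$3$. First I would fix the gadget. Since the classes of planar graphs and of $\mathcal{H}$-subgraph-free graphs are closed under taking subgraphs, the hypothesis lets us choose a planar $\mathcal{H}$-subgraph-free graph~$H$ with a $3$-regular list assignment~$L$ such that $(H,L)$ has no colouring respecting~$L$ while $|V(H)|+|E(H)|$ is minimum; standard arguments then show that every proper subgraph of~$H$ is $3$-choosable, that $\delta(H)\geq 3$ (a vertex of degree at most~$2$ could be coloured last), that~$H$ is $2$-connected, and that for every edge $e=uv$ the graph $F:=H-e$ has a colouring respecting~$L$ but no such colouring gives~$u$ and~$v$ distinct colours. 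Thus~$F$ is an ``equality gadget'': every colouring of~$F$ respecting~$L$ has $c(u)=c(v)$, with common value in $S_e:=\{a:F$ has a colouring respecting $L$ with $c(u)=c(v)=a\}\subseteq L(u)\cap L(v)$. Relabelling colours we may assume $L(u)=\{1,2,3\}$; if moreover some edge has $L(u)\neq L(v)$ then $|S_e|\leq 2$, and one or two suitably joined and relabelled copies of~$F$ (keeping the designated vertex's list equal to $\{1,2,3\}$) give, for each colour $a\in\{1,2,3\}$, a planar $\mathcal{H}$-subgraph-free graph with a $3$-regular list assignment and a designated vertex forced to avoid~$a$ but otherwise free.

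Given these ``list-padding'' gadgets, the reduction from Theorem~\ref{t-cc} is then routine: starting from an instance $(G,L_G)$, attach to each vertex~$w$ with $|L_G(w)|=2$ a copy of the gadget forbidding the colour missing from $L_G(w)$, identifying its designated vertex with~$w$, and leave all other (already size-$3$) lists unchanged, so that the new assignment is $3$-regular. Respecting colourings then correspond bijectively: the gadget at~$w$ restricts $c(w)$ exactly to $L_G(w)$, and any respecting colouring of $(G,L_G)$ extends over each gadget because a copy of $F$ minus its designated vertex is $3$-choosable and the designated vertex's colour lies in $S_e$. Planarity survives since each gadget is attached near~$w$ inside one face of~$G$. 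For $\mathcal{H}$-subgraph-freeness one uses that every member of~$\mathcal{H}$ is $2$-connected, hence occurs inside a single block of the constructed graph; the blocks of the construction are blocks of copies of~$F$ (all $\mathcal{H}$-subgraph-free) together with blocks coming from~$G$, so it remains to ensure that the hard instances of Theorem~\ref{t-cc} may themselves be taken $\mathcal{H}$-subgraph-free. Because~$\mathcal{H}$ is finite, only finitely many ``small'' $2$-connected configurations have to be destroyed, and I would do this by a preliminary replacement of edges by further $\mathcal{H}$-subgraph-free gadgets built from copies of~$F$ joined through long odd cycles $C_m$ with $C_m\notin\mathcal{H}$ (such~$m$ exists since~$\mathcal{H}$ is finite, and the only $2$-connected subgraph of a cycle is itself).

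The remaining, degenerate case is when $L(u)=L(v)$ for every edge, i.e.\ when~$L$ is constant and $(H,L)$ non-colourable merely says that~$H$ is not $3$-colourable, so that~$H$ is a $4$-critical planar $\mathcal{H}$-subgraph-free graph (for instance $H=K_4$, which is exactly what occurs when every member of~$\mathcal{H}$ has at least five vertices). Here it suffices to prove the special case that {\sc $3$-Colouring} is \NP-complete for planar $\mathcal{H}$-subgraph-free graphs; I would reduce from {\sc $3$-Colouring} on planar graphs (Theorem~\ref{t-gjs}) and simulate each edge by a gadget obtained from two copies of $H-e$ joined through a long odd cycle $C_m\notin\mathcal{H}$. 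Since~$L$ is constant, every colour is realizable as the common value of~$u$ and~$v$, so this gadget faithfully enforces ``different colours'' and allows all six ordered colour pairs; removing each original edge destroys all of~$G$'s cycles, and the $\mathcal{H}$-subgraph-freeness of the result follows from the block argument exactly as above.

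The main obstacle throughout is the gadget-building step: squeezing from a single minimal counterexample a list-padding (or edge-simulating) gadget that keeps every list of size exactly~$3$, that restricts (but does not over-restrict) the designated vertex, and that --- in combination with the source instance --- leaves no copy of any member of~$\mathcal{H}$. In particular, verifying that the hard instances of Theorem~\ref{t-cc} can be preprocessed to be $\mathcal{H}$-subgraph-free without disturbing the colouring problem, while only using $2$-connectedness of the members of~$\mathcal{H}$ and the finiteness of~$\mathcal{H}$, is where the genuine technical work lies.
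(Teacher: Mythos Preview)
Your plan follows the same two-case route as the paper: split on whether the bad list assignment~$L$ is constant, and in each case reduce from a hard planar colouring problem using copies of $H-e$ as equality gadgets attached at bivertices. The role of~$S_e$, the minimality of~$H$, and the block argument exploiting $2$-connectedness of the members of~$\mathcal{H}$ are all as in the paper.

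Where you diverge is precisely at the step you flag as the crux: forcing the source instance to be $\mathcal{H}$-subgraph-free. The paper handles this far more simply than you propose. Instead of reducing directly from Theorem~\ref{t-cc} and then preprocessing with ``gadgets built from copies of~$F$ joined through long odd cycles'', the paper first proves an auxiliary lemma (Theorem~\ref{t-new}): the problem of Theorem~\ref{t-cc} stays \NP-complete on planar graphs of \emph{arbitrarily high girth}, because every edge there has an endpoint with a size-$2$ list and can therefore be subdivided into a long path whose new internal vertices all carry that same size-$2$ list, without affecting colourability. Once the source graph~$G$ has girth exceeding the longest cycle length~$r$ in any member of~$\mathcal{H}$, your block argument finishes immediately --- no $F$-based edge gadgets are needed, and the delicate interaction between such gadgets and the size-$2$/size-$3$ source lists never arises. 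This single observation dissolves what you call ``the genuine technical work''.

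Your Case~1 description also does not match the paper and is not obviously correct as stated: the paper replaces each edge~$xy$ of the planar $3$-\textsc{Colouring} instance by a \emph{chain} of~$s$ identified copies of $H-e$ (propagating $c(x)=c(v_s^{xy})$) followed by a single edge $v_s^{xy}y$, which is what encodes inequality and simultaneously makes every non-gadget cycle have length at least $3(2s+1)>r$. ``Two copies of $H-e$ joined through a long odd cycle $C_m$'' does not encode an edge constraint in $3$-colouring, since an odd cycle on three colours imposes no relation between two designated non-adjacent vertices.
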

\end{sloppypar}
Just like Theorem~\ref{t-general1},
Theorem~\ref{t-general} has a number of applications.
For instance, if we let ${\cal H}=\{K_3\}$ then Theorem~\ref{t-general}, combined with Theorem~\ref{t-voigt2}, leads to the following result (Dvo\v{r}\'ak and Kawarabayashi also briefly mentioned how to obtain this result in their paper~\cite{DK13} but do not provide a full proof).

\begin{sloppypar}
\begin{corollary}\label{t-main2}
$3$-{\sc Regular List Colouring} is \NP-complete for planar triangle-free graphs.
\end{corollary}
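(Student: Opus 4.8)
The plan is to obtain this as a one-line specialisation of Theorem~\ref{t-general}. Take ${\cal H}=\{K_3\}$. Since $K_3=C_3$ is $2$-connected and planar, ${\cal H}$ is a finite set of $2$-connected planar graphs, so Theorem~\ref{t-general} applies to it. Moreover, a graph is ${\cal H}$-subgraph-free precisely when it contains no subgraph isomorphic to $K_3$, that is, precisely when it is triangle-free; hence ``planar ${\cal H}$-subgraph-free'' and ``planar triangle-free'' denote the same class.

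It therefore remains only to check the hypothesis of Theorem~\ref{t-general} for this choice of ${\cal H}$: that some planar ${\cal H}$-subgraph-free graph is not $3$-choosable. But that is exactly the content of Theorem~\ref{t-prev-results}.(v) (Voigt~\cite{Vo95}), which gives a planar triangle-free graph that is not $3$-choosable. Feeding this graph into Theorem~\ref{t-general} yields that $3$-{\sc Regular List Colouring} is \NP-complete for planar triangle-free graphs. (Membership in \NP\ needs no separate argument: a certificate is a colouring, and verifying that it respects the given $3$-regular list assignment takes linear time; in any case \NP-membership is already asserted by Theorem~\ref{t-general}.)

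So all the real work lies in Theorem~\ref{t-general}, not in this corollary. Following the outline of Section~\ref{s-our}, I would prove that theorem by taking a minimal planar ${\cal H}$-subgraph-free graph $H$ together with a $3$-regular list assignment $L$ that it fails to respect (such a pair exists by the assumed non-$3$-choosability), deleting a carefully chosen edge $e=uv$ so that in $H-e$ the endpoints $u$ and $v$ become ``forced'' under $L$, and then reducing from a suitable \NP-complete colouring problem on planar graphs — for triangle-free instances, a variant of $3$-{\sc Colouring} or the restricted list-colouring result of Theorem~\ref{t-cc} — by attaching copies of the gadget $H-e$ along edges of the instance to convert an ordinary (or bounded-list) colouring problem into an equivalent $3$-regular-list problem. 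The delicate point there, and the reason the hypothesis that ${\cal H}$ consists of $2$-connected graphs is needed, is that the gadget copies must be glued so that the whole construction stays planar and free of every subgraph in ${\cal H}$; $2$-connectedness of the forbidden graphs is what prevents a new copy of a graph in ${\cal H}$ from being created across the gluing interface. For Corollary~\ref{t-main2} itself, by contrast, there is no further obstacle — it is purely a matter of invoking the right two earlier results.
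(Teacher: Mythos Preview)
Your proposal is correct and matches the paper's own derivation exactly: the paper obtains this corollary by taking ${\cal H}=\{K_3\}$ in Theorem~\ref{t-general} and invoking Voigt's non-$3$-choosable planar triangle-free graph (Theorem~\ref{t-voigt2}). Your additional paragraph sketching the proof of Theorem~\ref{t-general} is accurate but unnecessary for the corollary itself.
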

\end{sloppypar}
Theorem~\ref{t-general} can also be used for other classes of graphs.
For example, let~${\cal H}$ be a finite set of graphs, each of which includes a $2$-connected graph on at least five vertices as a subgraph.
Let~${\cal I}$ be the set of these $2$-connected graphs.
The graph~$K_4$ is a planar ${\cal I}$-subgraph-free graph that is not $3$-choosable (since it is not $3$-colourable).
Therefore, Theorem~\ref{t-general} implies that $3$-{\sc Regular List Colouring} is \NP-complete for planar ${\cal H}$-subgraph-free graphs.
We can obtain more hardness results by taking some other planar graph that is not $3$-choosable, such as a wheel on an even number of vertices.
Also, if we let ${\cal H}=\{C_4, C_5\}$ we can use Theorem~\ref{t-general} by combining it with Theorem~\ref{t-mrw} to find that
$3$-{\sc Regular List Colouring} is \NP-complete for planar graphs with no $4$-cycles and no $5$-cycles.
We strengthen this result as follows (see Section~\ref{s-proof4} for the proof).

\begin{theorem}\label{t-main4}
$3$-{\sc Regular List Colouring} is \NP-complete for planar graphs with no $4$-cycles, no $5$-cycles and no intersecting triangles.
\end{theorem}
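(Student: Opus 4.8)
Write $\mathcal{C}$ for the class of planar graphs with no $4$-cycle, no $5$-cycle and no intersecting triangles. The plan is to adapt the proof of Theorem~\ref{t-general} rather than to quote it, because $\mathcal{C}$ is not of the form to which that theorem applies. Indeed $\mathcal{C}$ is closed under taking subgraphs, and since it already excludes $4$-cycles, two triangles of a graph in $\mathcal{C}$ can share at most one vertex (sharing an edge would produce a $4$-cycle); so on graphs with no $4$-cycle ``no intersecting triangles'' means exactly ``no bowtie as a subgraph'', where the bowtie is the graph formed by two triangles meeting in a single vertex. The bowtie is planar but not $2$-connected, and every $2$-connected subgraph of the bowtie is a triangle; hence a finite family of $2$-connected planar graphs defining $\mathcal{C}$ via the subgraph relation would have to contain $K_3$, and would then define a triangle-free class, whereas $\mathcal{C}$ contains $K_3$ itself. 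Thus Theorem~\ref{t-general} cannot be applied directly.

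Instead I would re-run the argument behind Theorem~\ref{t-general} inside $\mathcal{C}$. Theorem~\ref{t-mrw} supplies a member of $\mathcal{C}$ that is not $3$-choosable, and since $\mathcal{C}$ is subgraph-closed I may take $H\in\mathcal{C}$ that is minimal with respect to edge deletion among non-$3$-choosable graphs in $\mathcal{C}$, together with a $3$-regular list assignment $L$ of $H$ admitting no respecting colouring. By edge-minimality, for every edge $e=uv$ of $H$ the graph $H-e$ lies in $\mathcal{C}$, has a colouring respecting $L$, and every such colouring gives $c(u)=c(v)$ (otherwise it would also respect $L$ on $H$). For a suitably chosen $e$ this makes $H-e$ the gadget, used exactly as in the proof of Theorem~\ref{t-general}: reducing from an appropriate $\NP$-complete planar list-colouring problem --- for instance the triangle-free instances of Theorem~\ref{t-cc} --- we attach copies of $H-e$ so as to pad all lists up to size $3$, obtaining a planar instance $(G',L')$ that is colourable if and only if the original instance is. The problem is clearly in $\NP$.

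The new work is to verify that $G'\in\mathcal{C}$. That $G'$ has no $4$-cycle and no $5$-cycle follows exactly as in the treatment of the case $\mathcal{H}=\{C_4,C_5\}$, since each gadget copy is a subgraph of $H$ and is attached along a single vertex. For intersecting triangles, observe that the source instance is triangle-free and that each gadget copy, being a subgraph of $H\in\mathcal{C}$, contains no two intersecting triangles; so any two intersecting triangles of $G'$ would lie in two distinct gadget copies and meet at the vertex where those copies are glued to the rest of $G'$. I would rule this out by choosing, for each gadget copy, the removed edge $e$ so that its exposed terminal lies in no triangle of $H-e$: this is possible because $H$ has no intersecting triangles, so every vertex of $H$ lies in at most one triangle, and if the chosen terminal lies in a triangle then deleting as the edge $e$ one of the (at most two) triangle edges incident to it destroys that triangle.

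The delicate point --- and the step I expect to be the main obstacle --- is to show that the edge $e$ can be selected to meet all requirements at once: that $H-e$ keeps the property ``every $L$-colouring satisfies $c(u)=c(v)$, and at least one $L$-colouring exists'' that makes the gadget perform its role, that the exposed terminal is outside every triangle of $H-e$, and that gluing the gadget in creates no new $4$- or $5$-cycle at the interface. Reconciling these constraints on $e$ (and disposing of the harmless case of triangles inherited from the source graph) is where the real work lies; once it is done, $G'\in\mathcal{C}$ and the reduction proves the theorem.
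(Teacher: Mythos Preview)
Your high-level plan is right and matches the paper's: Theorem~\ref{t-general} cannot be quoted because the bowtie is not $2$-connected, so one must rerun its proof and check by hand that the constructed graph has no bowtie subgraph. Where you diverge is in how you propose to secure bowtie-freeness, and this is where you have made life harder than necessary.

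You aim to choose the deleted edge $e$ so that the exposed terminal of $H-e$ lies in no triangle, and you correctly flag reconciling this with the colouring requirements on $e$ as the main obstacle. The paper sidesteps this entirely. It observes that in the specific non-$3$-choosable graph of~\cite{MRW06} (with its bad list assignment $L$), no four vertices with the same list induce a connected subgraph; hence any minimal sub-counterexample $F$ has an edge $uv$ with $L(u)\neq L(v)$, so one is automatically in Case~2 of the proof of Theorem~\ref{t-general}. In that case the gadget copies are attached to the source graph $G$ by \emph{new edges} (or, when $t=2$, placed on a long new cycle through $w$), never identified with $w$. Consequently no vertex lies in two gadget copies, and no gadget vertex has two mutually adjacent neighbours outside its copy. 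Any bowtie not contained in a single copy of $F'$ would therefore force a triangle in $G$, which is impossible since $G$ is taken of girth at least~$6$. No control on triangles through the terminal of $F'$ is needed.

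Two smaller points. First, reducing ``for instance'' from the bipartite instances of Theorem~\ref{t-cc} only gives girth $\geq 4$; you need girth $\geq 6$ in the source graph to keep $G'$ free of $4$- and $5$-cycles, so you should use Theorem~\ref{t-new} (as the paper does). Second, your sketch speaks of gadgets ``attached along a single vertex'', which suggests identification; if you identify rather than attach by an edge, you do acquire the triangle-at-the-terminal problem you worry about. Attaching by an edge (Case~2) is precisely what makes that problem disappear.
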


Theorem~\ref{t-main1}, Corollary~\ref{t-main2} and Theorem~\ref{t-main4} can be seen as strengthenings of Theorems~\ref{t-voigt},~\ref{t-voigt2} and~\ref{t-mrw}, respectively.
Moreover, they complement Theorem~\ref{t-gjs}, which implies that $3$-{\sc List Colouring} is \NP-complete for planar graphs, and a result of Kratochv\'{\i}l~\cite{Kr93} that, for planar bipartite graphs, $3$-{\sc Precolouring Extension} is \NP-complete.
Theorem~\ref{t-main1} and Corollary~\ref{t-main2} also complement results of
Gutner~\cite{Gu96} who showed that {\sc $3$-Choosability} and {\sc $4$-Choosability} are $\Uppi_2^p$-complete for planar triangle-free graphs and planar graphs, respectively.
However, we emphasize that, for special graph classes, it is not necessarily the case that {\sc $\ell$-Choosability} is computationally harder than {\sc $\ell$-Regular List Colouring}.
For instance, contrast the fact that {\sc Choosability} is polynomial-time solvable on
$3P_1$-free graphs~\cite{GHHP13} with our next result, which we prove in Section~\ref{s-proof10}.

\begin{theorem}\label{t-3p1}
{\sc $3$-Regular List Colouring} is \NP-complete for $(3P_1,P_1+P_2)$-free graphs.
\end{theorem}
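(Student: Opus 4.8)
The plan is to first identify exactly which graphs we are dealing with, and then reduce from \textsc{3-Satisfiability}.

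\emph{Structural reformulation.} A graph~$G$ is $(3P_1,P_1+P_2)$-free if and only if its complement $\overline{G}$ is $(\overline{3P_1},\overline{P_1+P_2})$-free, i.e.\ $(K_3,P_3)$-free. A graph is $(K_3,P_3)$-free precisely when every connected component is a complete graph on at most two vertices, i.e.\ when it is a (not necessarily perfect) matching together with some isolated vertices. Hence the $(3P_1,P_1+P_2)$-free graphs are exactly the complements of matchings; equivalently, they are the complete multipartite graphs in which every part has size at most~$2$. In such a graph a colour class, being an independent set, has at most two vertices, and if it has two then these two vertices form a part (a non-edge of~$G$). So a colouring respecting a $3$-regular list assignment~$L$ is exactly a choice of $c(v)\in L(v)$ for every~$v$ such that every colour is used at most twice, and any colour used twice is assigned to both vertices of a part. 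Membership in \NP{} is immediate, so the task is to prove \NP-hardness for this class.

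\emph{The reduction.} We reduce from \textsc{3-Satisfiability}. Given a formula $\varphi$ with variables $x_1,\dots,x_n$ and clauses $C_1,\dots,C_m$, we build a complete multipartite graph~$G$ with all parts of size at most~$2$ and a $3$-regular list assignment~$L$ so that $\varphi$ is satisfiable if and only if $(G,L)$ has a colouring respecting~$L$. Since $G$ is almost complete, the colour of each vertex is forced to differ from that of almost every other vertex; to stop this rigidity from over-constraining things we use a \emph{separate} colour for every occurrence of every literal. Concretely: (i) for each literal occurrence we introduce a private \emph{literal colour}; (ii) for each variable $x_k$ a \emph{variable gadget}, a short ``chain'' of size-$2$ parts whose only colourings correspond to setting $x_k$ true or false, and which in the ``false'' colouring uses up exactly the literal colours of the occurrences of $x_k$ and leaves free those of $\overline{x_k}$ (and vice versa); (iii) for each clause $C_j$ a vertex forming its own part, whose list consists of the three (distinct) literal colours of its three occurrences, so that it can be coloured precisely when at least one of these colours is left free, i.e.\ when $C_j$ has a satisfied literal; (iv) \emph{forcing gadgets} to neutralise the spare slot in every list that is ``morally'' a size-$2$ list: to force a vertex $w$ with $L(w)=\{\mu_1,\mu_2,\gamma\}$ onto colour~$\gamma$ we add three mutually adjacent vertices $y_1,y_2,y_3$ in distinct parts, all with list $\{\mu_1,\mu_2,\mu_3\}$ for fresh $\mu_1,\mu_2,\mu_3$; these three must receive $\mu_1,\mu_2,\mu_3$ bijectively, so $\mu_1,\mu_2$ are used and~$w$ is forced onto~$\gamma$. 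Vertices of distinct gadgets are put in distinct parts, so $G$ is indeed complete multipartite with parts of size at most~$2$, and the construction is polynomial.

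\emph{Correctness and the main obstacle.} From a satisfying assignment one colours each variable gadget according to its truth value, colours the forcing gadgets as designed, and colours each clause vertex with the literal colour of one of its satisfied literals; since every literal colour is private to one occurrence and every colour used twice lies on a part, this respects~$L$. Conversely, in any colouring respecting~$L$ the forcing gadgets behave as intended, so each variable gadget sits in one of its two states (defining a truth assignment) and each clause vertex has been given a literal colour its variable gadget left free, so the corresponding literal is satisfied; hence $\varphi$ is satisfiable. The real work, and the main obstacle, is the global bookkeeping forced by the rigidity of $K_n$ minus a matching: because essentially all pairs of vertices must receive distinct colours one cannot reuse colours across gadgets, which is why each literal occurrence needs its own colour and why variable gadgets must be chains rather than single parts; and because every list must have size exactly~$3$, every intended ``size-$2$'' list must be propped up by a forcing gadget, so one must verify that these forcing gadgets — which also live inside the near-clique — are colourable only in the intended way and create no spurious identifications among the literal colours.
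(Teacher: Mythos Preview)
Your structural reformulation is correct and useful: $(3P_1,P_1+P_2)$-free graphs are exactly the complete multipartite graphs with every part of size at most~$2$. But your route is both far longer than the paper's and, as written, has a real gap.

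The paper's proof is three lines. It quotes a result of Golovach, Paulusma and Song that \textsc{$3$-List Colouring} is already \NP-complete on $(3P_1,P_1+P_2)$-free graphs when every list has size~$2$ or~$3$. Given such an instance $(G,L)$, add three pairwise-adjacent vertices $s,t,u$, make each adjacent to all of~$V(G)$, give them the list $\{1,2,3\}$ of three \emph{fresh} colours, and pad every size-$2$ list with colour~$1$. The new graph is still $(3P_1,P_1+P_2)$-free (the added vertices are universal), the list assignment is $3$-regular, and since one of $s,t,u$ must take colour~$1$ and is adjacent to all of~$G$, the padding colour is never usable on~$V(G)$. That is the entire argument.

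Your direct \textsc{3-SAT} reduction could plausibly be made to work, but its heart---the variable gadget---is only postulated, not constructed. You assert a ``chain of size-$2$ parts whose only colourings correspond to setting $x_k$ true or false'', yet in this near-clique every vertex is adjacent to every vertex outside its own part, so there is no local ``chain'' structure in the graph; any chaining must happen entirely through overlapping colours in the lists, and you give no lists. Even the base case fails your description: a single part $\{u,v\}$ with effective list $\{a,b\}$ also admits the mixed colourings $(a,b)$ and $(b,a)$, so there are not ``only'' two states. Separately, your forcing gadget as written forces~$w$ onto a \emph{single} colour~$\gamma$, whereas what you said you need is to neutralise one padding colour in an intended size-$2$ list; another layer is required. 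None of this is irreparable, but the missing gadgetry is precisely the work that the cited Golovach--Paulusma--Song result already does, and the paper sidesteps it by reducing from that result rather than from \textsc{3-SAT}.
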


\subsection{Closing Complexity Gaps for Planar Graphs}

Our new results, combined with known results, close a number of complexity gaps for the $\ell$-{\sc Regular List Colouring} problem.
Combining Theorem~\ref{t-main1} with Theorems~\ref{t-old},~\ref{t-gjs} and~\ref{t-th} gives us Corollary~\ref{c-main1}.
Combining Theorem~\ref{t-main4} with Theorems~\ref{t-old} and~\ref{t-lss} gives us Corollary~\ref{c-main3}.
Combining Corollary~\ref{t-main2} with Theorems~\ref{t-old} and~\ref{t-kt} gives us Corollary~\ref{c-main2}, whereas
Theorems~\ref{t-old} and~\ref{t-at} imply Corollary~\ref{c-main0}.

\begin{corollary}\label{c-main1}
Let~$\ell$ be a positive integer.
Then $\ell$-{\sc Regular List Colouring}, restricted to planar graphs, is \NP-complete if $\ell\in \{3,4\}$ and
polynomial-time solvable otherwise.
\end{corollary}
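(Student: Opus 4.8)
The plan is to split into three regimes according to the value of~$\ell$: the small cases $\ell\in\{1,2\}$, the two hard cases $\ell\in\{3,4\}$, and the large cases $\ell\geq 5$. In each regime I would relate $\ell$-{\sc Regular List Colouring} restricted to planar graphs to a problem whose complexity is already recorded in the excerpt, so that the corollary follows by assembling these observations.

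First I would dispose of $\ell=1$ and $\ell=2$. For $\ell=1$ the list assignment forces a unique colour on every vertex, so the algorithm simply checks in linear time whether every edge joins vertices of distinct colours. For $\ell=2$, any $2$-regular list assignment is in particular a list assignment of size at most~$2$, so $2$-{\sc Regular List Colouring} is a special case of $2$-{\sc List Colouring}; Theorem~\ref{t-old} therefore gives polynomial-time solvability, and restricting the input to planar graphs preserves this. (Both arguments work for arbitrary, not just planar, input graphs.)

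Next, for $\ell=3$ I would use that $3$-{\sc Colouring} is obtained from $3$-{\sc Regular List Colouring} by assigning the list $\{1,2,3\}$ to every vertex, as already noted under \figurename~\ref{f-col}; since Theorem~\ref{t-gjs} shows $3$-{\sc Colouring} is \NP-complete even for planar graphs of maximum degree~$4$, the same hardness transfers to $3$-{\sc Regular List Colouring} on planar graphs, and membership in \NP\ is immediate because a respecting colouring is a polynomially verifiable certificate. For $\ell=4$ the \NP-completeness on planar graphs is precisely Corollary~\ref{t-main1} (with \NP-membership again clear), so nothing further is needed.

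Finally, for $\ell\geq 5$ I would invoke Theorem~\ref{t-th}: every planar graph is $5$-choosable, and hence $\ell$-choosable for every $\ell\geq 5$ — this last step needs only the one-line observation that from an $\ell$-regular list assignment one may delete all but five colours from each list, apply $5$-choosability to the shrunken lists, and note that the resulting colouring still respects the original lists. Consequently every instance of $\ell$-{\sc Regular List Colouring} with $\ell\geq 5$ and planar input graph is a yes-instance, so the constant-time algorithm that always answers ``yes'' is correct. I do not expect a genuine obstacle anywhere: the entire proof is a matter of citing the right earlier results and checking that the relevant problems really are special cases of one another; the only point that requires even a short argument is the monotonicity of choosability in~$\ell$ used to cover all $\ell\geq 5$ from the single value $\ell=5$.
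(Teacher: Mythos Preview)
Your proposal is correct and follows essentially the same route as the paper: the paper's entire proof is the one-line remark that Corollary~\ref{c-main1} follows by combining Corollary~\ref{t-main1} with Theorems~\ref{t-old},~\ref{t-gjs} and~\ref{t-th}, which is exactly the set of ingredients you invoke (your separate treatment of $\ell=1$ and the monotonicity-of-choosability observation for $\ell\geq 5$ merely spell out details the paper leaves implicit).
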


\begin{corollary}\label{c-main3}
Let~$\ell$ be a positive integer.
Then $\ell$-{\sc Regular List Colouring}, restricted to planar graphs with no $4$-cycles and no $5$-cycles and no intersecting triangles, is \NP-complete if $\ell=3$ and
polynomial-time solvable otherwise (even if we allow intersecting triangles and $5$-cycles).
\end{corollary}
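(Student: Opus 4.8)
\textbf{Proof proposal for Corollary~\ref{c-main3}.} The plan is a straightforward case analysis on the value of~$\ell$, assembling the cited results; there is no real obstacle here, since the genuine mathematical content is isolated in Theorem~\ref{t-main4} (proved in Section~\ref{s-proof4}).

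First I would dispose of the small cases. For $\ell=1$ the problem is trivial: every vertex is forced to take the unique colour in its list, so one only needs to verify in linear time that no edge joins two vertices with the same singleton list. For $\ell=2$, note that $2$-{\sc Regular List Colouring} is a special case of $2$-{\sc List Colouring} (cf.\ \figurename~\ref{f-col}), so Theorem~\ref{t-old} yields a polynomial-time algorithm, in particular on the subclass of planar graphs with no $4$-cycles, no $5$-cycles and no intersecting triangles. For $\ell=3$, membership in~\NP{} is immediate (a colouring respecting~$L$ is a polynomial-size, polynomial-time-checkable certificate), while \NP-hardness on exactly this class is the statement of Theorem~\ref{t-main4}; hence the problem is \NP-complete.

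The remaining case is $\ell\ge 4$, where I would argue that every instance is a yes-instance, so the problem is solvable trivially. By Theorem~\ref{t-prev-results}.(iv), every planar graph with no $4$-cycles is $4$-choosable. I would then record the (routine) monotonicity fact that a $4$-choosable graph is $\ell$-choosable for every $\ell\ge 4$: given an $\ell$-regular list assignment, delete $\ell-4$ colours from each list to obtain a $4$-regular one, apply $4$-choosability, and observe that the resulting colouring still respects the original lists. Consequently every planar graph with no $4$-cycles admits a colouring respecting any prescribed $\ell$-regular list assignment, so the algorithm simply answers ``yes''. Since this argument uses only the absence of $4$-cycles, it applies verbatim to the broader class of planar graphs with no $4$-cycles, which accounts for the parenthetical remark in the statement.

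The only point requiring the slightest care is the monotonicity observation for choosability used in the last case; everything else is bookkeeping, and, as noted, the substantive work lives in Theorem~\ref{t-main4}.
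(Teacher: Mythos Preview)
Your proposal is correct and follows exactly the route the paper indicates: the paper simply states that Corollary~\ref{c-main3} follows by combining Theorem~\ref{t-main4} with Theorems~\ref{t-old} and~\ref{t-lss}, and your case analysis unpacks precisely this combination (with the harmless addition of the monotonicity-of-choosability step for $\ell\ge 5$ and the trivial $\ell=1$ case). There is nothing to add or correct.
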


\begin{corollary}\label{c-main2}
Let~$\ell$ be a positive integer.
Then $\ell$-{\sc Regular List Colouring}, restricted to planar triangle-free graphs, is \NP-complete if $\ell=3$ and
polynomial-time solvable otherwise.
\end{corollary}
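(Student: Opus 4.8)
The plan is to split into cases according to the value of the fixed integer~$\ell$, since the claimed dichotomy has a single hard point. For $\ell=1$ each vertex is forced to its unique admissible colour, so it suffices to check in linear time whether two adjacent vertices receive the same colour; neither planarity nor triangle-freeness is needed here. For $\ell=2$ the instance is a restricted case of $2$-{\sc List Colouring}, so polynomial-time solvability follows immediately from Theorem~\ref{t-old}.

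For $\ell=3$ I would invoke Corollary~\ref{t-main2}, which states exactly that $3$-{\sc Regular List Colouring} is \NP-complete for planar triangle-free graphs; membership in \NP\ is clear, as a colouring respecting the list assignment is a polynomially verifiable certificate. This is where the real work sits: Corollary~\ref{t-main2} is itself derived from Theorem~\ref{t-general} (taking ${\cal H}=\{K_3\}$) together with the existence of a non-$3$-choosable planar triangle-free graph, guaranteed by Theorem~\ref{t-voigt2}. Within the present argument, however, Corollary~\ref{t-main2} is simply cited, so no further reasoning is required at this point; the hardness is entirely inherited.

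Finally, for every $\ell\geq 4$ the problem is trivially polynomial-time solvable, because the answer is always ``yes''. Indeed, Theorem~\ref{t-kt} says that every planar triangle-free graph is $4$-choosable, and $4$-choosability implies $\ell$-choosability for all $\ell\geq 4$: from an $\ell$-regular list assignment one may delete $\ell-4$ colours from each list to obtain a $4$-list assignment, and any colouring respecting the smaller lists also respects the original ones. Hence every planar triangle-free graph equipped with an $\ell$-regular list assignment admits a respecting colouring, so the algorithm may output ``yes'' without even inspecting the input. Assembling the four cases $\ell=1$, $\ell=2$, $\ell=3$ and $\ell\geq 4$ yields the stated classification; the only non-routine ingredient is Corollary~\ref{t-main2}, and every other case is bookkeeping.
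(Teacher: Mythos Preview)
Your proposal is correct and follows essentially the same approach as the paper, which simply states that Corollary~\ref{c-main2} is obtained by combining Corollary~\ref{t-main2} with Theorems~\ref{t-old} and~\ref{t-kt}. You have merely unpacked these references into explicit case distinctions (and added the trivial $\ell=1$ case, which the paper implicitly absorbs into Theorem~\ref{t-old}).
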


\begin{corollary}\label{c-main0}
Let~$\ell$ be a positive integer.
Then $\ell$-{\sc Regular List Colouring}, restricted to planar bipartite graphs, is polynomial-time solvable.
\end{corollary}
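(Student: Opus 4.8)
The plan is to split into cases according to the (fixed) value of $\ell$, and in each case to either reduce to one of the cited results or dispose of the case directly. The only case that actually uses the restriction to planar bipartite graphs will be $\ell\geq 3$.

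For $\ell=1$ the list assignment forces the colour of every vertex: the unique candidate colouring assigns to each $u$ the single element of $L(u)$, and we need only check in linear time whether this is proper, i.e. whether no edge has both endpoints carrying the same forced colour. For $\ell=2$, any instance of $2$-{\sc Regular List Colouring} is in particular an instance of $2$-{\sc List Colouring} (each list has size at most~$2$), so polynomial-time solvability is immediate from Theorem~\ref{t-old}. Neither of these two cases needs any structural assumption on the input graph.

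For $\ell\geq 3$ I would first record the elementary monotonicity of choosability: if a graph $G$ is $3$-choosable then it is $\ell$-choosable for every $\ell\geq 3$. Indeed, given an $\ell$-regular list assignment~$L$ of~$G$, delete colours from each list until every list has size exactly~$3$, obtaining a $3$-regular list assignment~$L'$; a colouring respecting~$L'$ also respects~$L$. Combining this with Theorem~\ref{t-at}, which says that every planar bipartite graph is $3$-choosable, shows that for $\ell\geq 3$ every planar bipartite graph equipped with an $\ell$-regular list assignment has a colouring respecting it. Hence on this input class the algorithm may simply answer ``yes'' (and, since the proof of Theorem~\ref{t-at} is constructive, one can even output such a colouring in polynomial time).

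There is no genuine obstacle: the statement is a packaging of Theorems~\ref{t-old} and~\ref{t-at} together with the trivial case $\ell=1$. The only point worth making explicit is the monotonicity observation above, which is exactly what allows the single bound ``$3$-choosable'' of Theorem~\ref{t-at} to settle all $\ell\geq 3$ simultaneously.
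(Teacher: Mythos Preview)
Your proof is correct and follows essentially the same approach as the paper, which simply cites Theorems~\ref{t-old} and~\ref{t-at}. The only cosmetic difference is that you treat $\ell=1$ separately, whereas the paper implicitly subsumes it under Theorem~\ref{t-old} (since $2$-{\sc List Colouring} allows lists of size at most~$2$), and you make the monotonicity-of-choosability step explicit where the paper leaves it implicit.
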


\section{Bounded Degree Graphs}\label{s-bounded}

\subsection{Known Results for Bounded Degree Graphs}\label{s-bknown}

First we present a result of Kratochv\'il and Tuza~\cite{KT94}.

\begin{theorem}[\cite{KT94}]\label{t-kt94}
{\sc List Colouring} is polynomial-time solvable on graphs of maximum degree at most~$2$.
\end{theorem}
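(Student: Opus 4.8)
The plan is to reduce, in polynomial time, to the special case where every list has size at most~$2$, and then invoke Theorem~\ref{t-old}. The key observation is that in a graph of maximum degree at most~$2$ a vertex~$v$ with $|L(v)|\geq 3$ has strictly more admissible colours than neighbours, so it can never be a bottleneck and may be deleted and re-inserted later.

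First I would describe the reduction. Given an instance $(G,L)$ with maximum degree at most~$2$, I repeatedly pick a vertex~$v$ with $|L(v)|\geq 3$ and delete it, recording the deletion order $v_1,v_2,\ldots,v_m$. Since deleting vertices only decreases degrees, the graph keeps maximum degree at most~$2$ throughout, and when the process stops we are left with an instance $(G',L')=(G-\{v_1,\ldots,v_m\},L)$ in which every list has size at most~$2$. Each deletion step is polynomial and there are at most $|V(G)|$ of them, so the reduction runs in polynomial time; by Theorem~\ref{t-old} we can then decide in polynomial time whether $(G',L')$ has a colouring respecting~$L'$.

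Next I would prove correctness, namely that $(G,L)$ has a colouring respecting~$L$ if and only if $(G',L')$ has one. One direction is immediate by restriction. For the converse, starting from a colouring of $(G',L')$ I re-insert the deleted vertices in the reverse order $v_m,v_{m-1},\ldots,v_1$, colouring each $v_i$ at the moment it is re-inserted: at that point the neighbours of~$v_i$ already present (those lying in~$G'$ together with those $v_j$ with $j>i$) number at most $\deg_G(v_i)\leq 2$, whereas $|L(v_i)|\geq 3$, so a colour for~$v_i$ in~$L(v_i)$ avoiding all its present coloured neighbours always exists; edges to vertices $v_j$ with $j<i$ are respected because $v_j$ is coloured later while $v_i$ is already present. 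After all re-insertions this yields a colouring of~$G$ respecting~$L$.

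The argument is short and I do not anticipate a serious obstacle; the one point needing care is the bookkeeping in the re-insertion step, namely verifying that each vertex, when coloured, sees at most two already-coloured neighbours despite edges to later-inserted vertices being temporarily absent. As a self-contained alternative that avoids Theorem~\ref{t-old}, one could instead exploit the structural fact that a graph of maximum degree at most~$2$ is a disjoint union of paths and cycles, handle the components independently, solve each path by a left-to-right dynamic program that maintains the set of colours extendable to a partial colouring respecting the lists, and solve each cycle by trying in turn each colour of one fixed vertex and reducing to the path case; this is clearly polynomial as well.
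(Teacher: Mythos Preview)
Your argument is correct. The reduction that iteratively deletes vertices whose list size strictly exceeds their degree is sound: such a vertex can always be coloured last, so deleting it is equivalence-preserving, and after exhaustion every remaining list has size at most~$2$, whence Theorem~\ref{t-old} applies. Your re-insertion bookkeeping is also fine. The alternative you sketch---decomposing into paths and cycles and running a dynamic program---is essentially the original argument of Kratochv\'{\i}l and Tuza.

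There is, however, nothing to compare against in this paper: Theorem~\ref{t-kt94} is merely quoted from~\cite{KT94} and is not re-proved here. So your proposal is not ``the same as'' or ``different from'' the paper's proof; it simply supplies a proof where the paper gives none. If your aim is to make the manuscript self-contained on this point, either of your two approaches would serve; the first has the virtue of re-using Theorem~\ref{t-old}, which the paper already states.
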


Brooks' Theorem~\cite{Br41} states that every graph~$G$ with maximum degree~$d$ has a $d$-colouring unless~$G$ is a complete graph or a cycle with an odd number of vertices. The next result of Vizing~\cite{Vi76} generalizes Brooks' Theorem to list colourings.

\begin{theorem}[\cite{Vi76}]\label{t-vizing}
Let~$d$ be a positive integer.
Let $G=(V,E)$ be a connected graph of maximum degree at most~$d$ and let~$L$ be a $d$-regular list assignment for~$G$.
If~$G$ is not a cycle or a complete graph then~$G$ has a colouring that respects~$L$.
\end{theorem}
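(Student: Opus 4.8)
The plan is to mimic the classical proof of Brooks' theorem, with the lists handled by a greedy argument along a carefully chosen vertex ordering. Throughout, write $n=|V(G)|$, and note that $|L(u)|=d$ for every vertex~$u$, so that in any partial colouring a vertex is ``stuck'' only if it already has~$d$ pairwise-differently-coloured neighbours. I would repeatedly use the following easy observation: if~$H$ is connected, $|L(w)|\ge \deg_H(w)$ for all~$w$, and $|L(w_0)|>\deg_H(w_0)$ for some~$w_0$, then~$H$ has an $L$-colouring. Indeed, take a spanning tree of~$H$ rooted at~$w_0$ and list $V(H)$ in non-increasing order of distance from~$w_0$; colouring greedily in this order, every vertex other than~$w_0$ has an uncoloured neighbour (its parent) when it is reached, hence at most $\deg_H(w)-1<|L(w)|$ forbidden colours, and~$w_0$ has at most $\deg_H(w_0)<|L(w_0)|$ neighbours in total.

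First I would reduce to the case that~$G$ is $d$-regular: if some vertex has degree less than~$d$, the observation above (with $H=G$) finishes immediately. So assume~$G$ is $d$-regular; since~$G$ is connected and is neither a cycle nor a complete graph this forces $d\ge 3$ and $G\neq K_{d+1}$. Next I would reduce to the case that~$G$ is $2$-connected. Suppose~$G$ has a cut vertex~$x$ and let $C_1,\dots,C_t$ ($t\ge 2$) be the components of $G-x$; each $G[C_i]$ is connected, every vertex of~$C_i$ has all~$d$ of its neighbours inside $C_i\cup\{x\}$, and $\sum_i|N(x)\cap C_i|=d$. Colouring~$x$ and then $L$-colouring each $G[C_i]$ separately (after deleting~$x$'s colour from the lists of its neighbours in~$C_i$) yields an $L$-colouring of~$G$, so it suffices to find one colour at~$x$ that extends into every~$C_i$. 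For most colour choices at~$x$ some neighbour of~$x$ in~$C_i$ ends up with list size~$d$ but degree $d-1$ in $G[C_i]$, so the easy observation applies there; a counting argument over $i=1,\dots,t$ then pins down a single admissible colour for~$x$. Hence we may assume~$G$ is $2$-connected.

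It then remains to treat a $2$-connected, $d$-regular graph~$G$ ($d\ge 3$) that is not complete. Following Lov\'asz's proof of Brooks' theorem, the key structural step is to find vertices $a,b,c$ with $ab,bc\in E(G)$, $ac\notin E(G)$ and $G-\{a,c\}$ connected; this comes from a case analysis on the connectivity of~$G$ (immediate when~$G$ is $3$-connected, and handled via a $2$-separator otherwise). Given $a,b,c$, order $V(G)$ as $v_1=a$, $v_2=c$, then $v_3,\dots,v_{n-1}$, then $v_n=b$, where $v_3,\dots,v_n$ is a non-increasing-distance ordering of $G-\{a,c\}$ in a spanning tree rooted at~$b$; thus every~$v_i$ with $i<n$ has a neighbour~$v_j$ with $j>i$, and~$b$ is adjacent to both~$a$ and~$c$. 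Colour greedily: each~$v_i$ with $i<n$ has at most $d-1$ coloured neighbours when reached, hence an admissible colour; when~$b$ is reached all~$d$ of its neighbours are coloured, but two of them are~$a$ and~$c$, which can be arranged to carry the same colour, so at most $d-1$ colours are forbidden at~$b$. The place where the most care is needed — and the main obstacle — is precisely this last point in the list setting: we cannot in general make~$a$ and~$c$ take a common colour, since $L(a)$ and $L(c)$ may be disjoint, so when $L(a)\cap L(c)=\emptyset$ the colouring of $v_3,\dots,v_{n-1}$ must instead be steered so that some other neighbour of~$b$ repeats the colour of~$a$ or of~$c$ (or receives a colour outside $L(b)$). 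Making that last step rigorous, together with the three-vertex lemma and the block-by-block reduction, is the real content of the argument.
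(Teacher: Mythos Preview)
The paper does not prove Theorem~\ref{t-vizing}: it is quoted as a known result of Vizing and used as a black box, so there is no proof in the paper to compare your proposal against. I will therefore just comment on the proposal itself.

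The overall Lov\'asz-style strategy is sound, and the gap you flag at the very end is easy to close. If $L(a)\cap L(c)\neq\emptyset$, give $a$ and $c$ a common colour as in Brooks' proof. If $L(a)\cap L(c)=\emptyset$, then $|L(a)\cup L(c)|=2d>d=|L(b)|$, so $L(a)\cup L(c)\not\subseteq L(b)$; without loss of generality pick $\alpha\in L(a)\setminus L(b)$, colour $a$ with $\alpha$ and $c$ arbitrarily from $L(c)$. When $b$ is finally reached, its $d$ neighbours include $a$, whose colour $\alpha\notin L(b)$ cannot block anything, so at most $d-1$ colours of $L(b)$ are forbidden and $b$ can be coloured. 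No steering of the intermediate vertices is needed.

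The real soft spot is your cut-vertex reduction. The ``counting argument over $i=1,\dots,t$'' that is supposed to produce a single colour for~$x$ extending into every~$C_i$ is not spelled out, and in fact the specific mechanism you describe fails: you want a colour $c\in L(x)$ such that, for every~$i$, some neighbour of~$x$ in~$C_i$ does not have~$c$ in its list (so that the easy observation applies in $G[C_i]$). But if all lists happen to coincide, every colour of $L(x)$ lies in every neighbour's list and no such~$c$ exists, yet the theorem must still hold. One clean repair is to work block by block via induction on $|V(G)|$: take a leaf block~$B$ with cut vertex~$x$, first colour $G-(B\setminus\{x\})$ using your observation rooted at~$x$ (this leaves at least $\deg_B(x)$ colours available at~$x$), and then handle~$B$; since $B$ is $2$-connected with $\deg_B(x)<d$ while all other vertices of~$B$ have degree exactly~$d$, $B$ can be neither $K_{d+1}$ nor, as $d\ge 3$, a cycle, and one is reduced to the $2$-connected case already analysed --- but now with one list of size possibly smaller than~$d$, which means you are effectively proving the stronger degree-list statement (Theorem~\ref{t-gallai2}) rather than Theorem~\ref{t-vizing} in isolation. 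That is fine, but it should be said explicitly rather than hidden behind ``a counting argument''.
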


And we need two other results of Chleb\'ik and Chleb\'ikov\'a~\cite{CC06}.

\begin{theorem}[\cite{CC06}]\label{t-cc2}
{\sc Precolouring Extension} is polynomial-time solvable on graphs of maximum degree~$3$.
\end{theorem}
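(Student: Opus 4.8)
The plan is to split according to the number~$k$ of colours, which is part of the input of {\sc Precolouring Extension}. For $k\le 2$, regard each precoloured vertex as having the singleton list $\{c_W(v)\}$ and each remaining vertex as having the list $\{1,\dots,k\}$; this turns the instance into an equivalent {\sc List Colouring} instance in which every list has size at most~$2$, so Theorem~\ref{t-old} applies. For $k\ge 4$ we have $k>3\ge\deg_G(v)$ for every vertex~$v$, so it suffices to verify that the precolouring is a proper colouring of the subgraph it induces (returning ``no'' otherwise) and then extend it greedily: processing the uncoloured vertices in an arbitrary order, each of them has at most three already-coloured neighbours, hence fewer than~$k$ forbidden colours, so an admissible colour is always available and the answer is ``yes''. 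The only substantial case is therefore $k=3$.

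For $k=3$ we pass to the {\sc List Colouring} instance $(G,L)$ with $L(v)=\{c_W(v)\}$ on precoloured vertices and $L(v)=\{1,2,3\}$ elsewhere, so that $|L(v)|\ge\deg_G(v)$ holds for every non-precoloured vertex. We then exhaustively apply the standard answer-preserving reduction rules: (a) if some list becomes empty, return ``no''; (b) if some vertex~$v$ has $L(v)=\{c\}$, delete~$c$ from the list of every neighbour of~$v$ and remove~$v$; (c) if some vertex~$v$ has $|L(v)|>\deg(v)$, remove~$v$, since it can be coloured last. This terminates in polynomial time. The crucial observation is that, for any vertex~$v$ that survives, the quantity $|L(v)|-\deg(v)$ never decreases during the process, because deleting a neighbour lowers $\deg(v)$ by one and $|L(v)|$ by at most one. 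Since every surviving vertex is non-precoloured (a precoloured vertex retains a singleton list until rule~(b) removes it) and therefore had $|L(v)|-\deg(v)\ge 0$ at the start, while rule~(c) forbids this quantity from being positive at the end, every vertex of the reduced graph~$G'$ satisfies $|L'(v)|=\deg_{G'}(v)\in\{2,3\}$.

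It then remains to decide {\sc List Colouring} for an instance $(G',L')$ of maximum degree at most~$3$ with $|L'(v)|=\deg_{G'}(v)$ for all~$v$; this is exactly the setting of the Erd{\H{o}}s--Rubin--Taylor theorem~\cite{ERT79} and of Vizing's Theorem~\ref{t-vizing}. Working component by component, a connected component that is not a Gallai tree is $L'$-colourable; a component that is a Gallai tree can be recognized in polynomial time from a block decomposition (its blocks lie among $K_1,K_2,K_3,K_4$ and cycles), and for such a component colourability reduces to verifying local list-compatibility conditions on its blocks and cut vertices (for instance, a cycle block obstructs colouring only when it is odd with all lists equal, and a complete-graph block only when its lists admit no system of distinct representatives). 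The conjunction of the verdicts over all components answers the instance.

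I expect the main obstacle to be this last step. Although the structure of uncolourable pairs $(G',L')$ with $|L'(v)|=\deg_{G'}(v)$ for all~$v$ is classical, one has to state the relevant condition precisely and argue that it is polynomial-time checkable; one must also verify with care that each reduction rule preserves the answer and that the reduced instance genuinely lands in the regime $|L'(v)|=\deg_{G'}(v)$, handling the degenerate situations (isolated or pendant vertices, components equal to~$K_4$, and already-improper precolourings, the last of which is detected automatically by rule~(a)). The cases $k\le 2$ and $k\ge 4$ and the preprocessing itself are routine.
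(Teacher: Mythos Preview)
The paper does not actually prove Theorem~\ref{t-cc2}; it is quoted as a result of Chleb\'ik and Chleb\'ikov\'a~\cite{CC06} and used as a black box. So there is no in-paper proof to compare against directly. That said, your argument for the case $k=3$ is essentially the same machinery the paper deploys for the closely related Theorem~\ref{t-precol}: translate the precolouring into lists, iteratively delete vertices with $|L(v)|>\deg(v)$ to reach an instance with $|L(v)|=\deg(v)$ everywhere, and then split components into Gallai trees (handled by a direct algorithm) and non--Gallai-trees (automatically colourable by the Borodin/Erd{\H{o}}s--Rubin--Taylor theorem, the paper's Theorem~\ref{t-gallai2}). Your extra rules~(a) and~(b) for empty and singleton lists are harmless but unnecessary for the monotonicity argument, and your case split on~$k$ (with $k\le 2$ via Theorem~\ref{t-old} and $k\ge 4$ greedily) is the natural way to upgrade the fixed-$k$ statement of Theorem~\ref{t-precol} to the unrestricted {\sc Precolouring Extension} of Theorem~\ref{t-cc2}.

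The one place where the paper is cleaner than your sketch is the treatment of Gallai-tree components. You describe this as ``verifying local list-compatibility conditions on its blocks and cut vertices'' and flag it as the main obstacle; your parenthetical examples are suggestive but not a complete criterion, since colourability of a Gallai tree with $|L(v)|=\deg(v)$ is not decided block-by-block in isolation (the choice at a cut vertex couples adjacent blocks). The paper sidesteps this entirely by proving Corollary~\ref{c-gallai}: {\sc List Colouring} is polynomial-time solvable on Gallai trees outright, via a leaf-block peeling argument (Theorem~\ref{t-blocks}) combined with the fact that {\sc List Colouring} is polynomial on graphs of maximum degree~$2$ (Theorem~\ref{t-kt94}) and on complete graphs. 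Invoking that corollary would close the gap in your proposal without any ad hoc case analysis.
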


\begin{theorem}[\cite{CC06}]\label{t-precol}
Let~$k$ be a positive integer.
Then {\sc $k$-Precolouring Extension} is polynomial-time solvable for graphs of maximum degree at most~$k$.
\end{theorem}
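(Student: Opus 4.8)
The plan is to translate \textsc{$k$-Precolouring Extension} into an instance of list colouring in which every vertex receives a list at least as large as its degree, and then to invoke the classical characterisation of Erd\H{o}s, Rubin and Taylor~\cite{ERT79} (and, independently, Borodin) of the connected graphs for which such instances can fail to be colourable, namely the Gallai trees with degree-sized lists.

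First I would dispose of the precolouring. Given a graph~$G$ of maximum degree at most~$k$ with a $k$-precolouring~$c_W$, if~$c_W$ is not a proper colouring of~$G[W]$ then answer NO. Otherwise set $L(v)=\{c_W(v)\}$ for $v\in W$ and $L(v)=\{1,\ldots,k\}$ for $v\notin W$, and for every $v\notin W$ delete from~$L(v)$ each colour~$c_W(u)$ with $u\in N_G(v)\cap W$; if any list becomes empty, answer NO. A routine check shows that~$G$ has a $k$-colouring extending~$c_W$ if and only if $G':=G-W$ has a colouring that respects the restriction~$L'$ of~$L$ to~$V(G')$. Moreover $|L'(v)|\geq k-|N_G(v)\cap W|\geq \deg_G(v)-|N_G(v)\cap W|=\deg_{G'}(v)$, so in~$G'$ every list is at least as large as the vertex's degree (and, since~$k$ is fixed, has size at most~$k$).

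Next I would treat each connected component~$C$ of~$G'$, with its induced lists, separately. If some $v\in C$ has $|L'(v)|>\deg_C(v)$, then~$C$ is $L'$-colourable: root a spanning tree of~$C$ at~$v$, order the vertices by non-increasing depth (so each vertex other than~$v$ is ordered before its parent), and colour greedily along this order; each vertex $u\neq v$ then has at most $\deg_C(u)-1<|L'(u)|$ already-coloured neighbours, and~$v$ has at most $\deg_C(v)<|L'(v)|$. Otherwise $|L'(v)|=\deg_C(v)$ for all $v\in C$. If~$C$ is not a Gallai tree (some block is neither a complete graph nor a cycle) then~$C$ is $L'$-colourable by the Erd\H{o}s--Rubin--Taylor/Borodin theorem~\cite{ERT79}. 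If~$C$ is a Gallai tree, I would decide $L'$-colourability by a dynamic programme over the block--cut-vertex tree of~$C$: processing blocks from the leaves towards a chosen root, for each block~$B$ (a complete graph or a cycle) and its root-side cut vertex~$v$ one records, for every colour~$\alpha$, a single bit indicating whether~$B$ admits a proper $L'$-colouring with $v\mapsto\alpha$ in which every other vertex of~$B$ receives a colour already certified feasible for all blocks hanging below it. This test is a bipartite matching (system of distinct representatives) when~$B$ is complete, and a path/cycle scan --- equivalently a $2$-{\sc List Colouring} instance, handled by Theorem~\ref{t-old} --- when~$B$ is a cycle. Then~$C$ is $L'$-colourable iff the root receives a feasible colour, and running all block tests takes polynomial time.

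The main obstacle is the Gallai-tree step: a cut vertex may belong to many blocks, so its list can be as large as~$k$, and feasibility has to be propagated up the block tree; proving that the one-bit-per-(block, colour) recursion is correct and runs in polynomial time is the only real work, the rest being bookkeeping. (For $k\leq 2$ and $k=3$ the statement also follows directly from Theorems~\ref{t-kt94} and~\ref{t-cc2}, so strictly speaking only $k\geq 4$ requires the argument above.) Combining the reduction with the component-wise algorithm yields a polynomial-time algorithm, proving the theorem.
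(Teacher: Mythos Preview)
Your proposal is correct and follows essentially the same route as the paper: reduce to list colouring on $G-W$ with $|L(v)|\ge\deg(v)$, dispose of components with a slack vertex (you do it by a rooted greedy argument, the paper by iteratively deleting slack vertices---equivalent ideas), and then split the tight case into ``not a Gallai tree'' (handled by the Borodin/Erd\H{o}s--Rubin--Taylor theorem, the paper's Theorem~\ref{t-gallai2}) versus ``Gallai tree'' (handled by a block--cut-tree dynamic programme, which the paper packages as Theorem~\ref{t-blocks} and Corollary~\ref{c-gallai}). One small wording fix: in a cycle block the cut vertices may carry lists of size larger than~$2$, so the per-block test is not literally a \textsc{$2$-List Colouring} instance---cite Theorem~\ref{t-kt94} (graphs of maximum degree at most~$2$) rather than Theorem~\ref{t-old}.
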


\subsection{Closing Complexity Gaps for Bounded Degree Graphs}

We have the following two classifications, which we obtain by combining previously known results with some small observations.

\begin{corollary}\label{c-colouringdegree}
Let~$d$ be a positive integer.
The following two statements hold for graphs of maximum degree at most~$d$.
\begin{enumerate}[(i)]
\item {\sc List Colouring} is \NP-complete if $d\geq 3$ and polynomial-time solvable if $d\leq 2$.
\item {\sc Precolouring Extension} and {\sc Colouring} are \NP-complete if $d\geq 4$ and polynomial-time solvable if $d\leq 3$.
\end{enumerate}
\end{corollary}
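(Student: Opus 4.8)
The plan is to obtain both parts by combining the known results collected in Sections~\ref{s-known} and~\ref{s-bknown} with the trivial reductions depicted in \figurename~\ref{f-col}; no fresh construction is required. Two elementary observations are used repeatedly. First, if a problem is \NP-complete on the class of graphs of maximum degree at most~$d_0$, then it is \NP-complete on the class of graphs of maximum degree at most~$d$ for every $d\geq d_0$, since the first class is contained in the second. Second, {\sc List Colouring}, {\sc Precolouring Extension} and {\sc Colouring} all lie in \NP, because a colouring is a certificate verifiable in polynomial time, and {\sc Colouring} is the special case $W=\emptyset$ of {\sc Precolouring Extension}.

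For part~(i), the case $d\leq 2$ is precisely Theorem~\ref{t-kt94}. For $d\geq 3$ it suffices, by the first observation, to establish \NP-completeness for $d=3$. Membership in \NP is noted above, and \NP-hardness is immediate from Theorem~\ref{t-cc}, which already gives \NP-completeness of {\sc List Colouring} for $3$-regular planar bipartite graphs, a subclass of the graphs of maximum degree~$3$.

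For part~(ii), first let $d\leq 3$. Theorem~\ref{t-cc2} states that {\sc Precolouring Extension} is polynomial-time solvable on graphs of maximum degree at most~$3$, and since {\sc Colouring} is a special case of {\sc Precolouring Extension} it is polynomial-time solvable on this class as well. Now let $d\geq 4$; by the first observation it suffices to treat $d=4$. By Theorem~\ref{t-gjs}, $3$-{\sc Colouring}, and therefore {\sc Colouring}, is \NP-complete for planar graphs of maximum degree~$4$; since {\sc Colouring} is a special case of {\sc Precolouring Extension}, the latter is \NP-complete for this class too. Together with membership in \NP, this finishes part~(ii).

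As the argument is entirely an assembly of known statements, there is no genuine obstacle; the only point that needs attention is that the hardness results sit at exactly the right degree thresholds -- degree~$3$ for {\sc List Colouring} (Theorem~\ref{t-cc}) and degree~$4$ for {\sc Colouring} (Theorem~\ref{t-gjs}) -- so that they abut the polynomial cases supplied by Theorems~\ref{t-kt94} and~\ref{t-cc2} with no intermediate value of~$d$ left unclassified, and that {\sc Colouring} may be viewed as the $W=\emptyset$ restriction of {\sc Precolouring Extension} to transfer both tractability and hardness between the two problems.
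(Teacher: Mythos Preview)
Your proof is correct and follows exactly the same approach as the paper, invoking Theorems~\ref{t-cc} and~\ref{t-kt94} for part~(i) and Theorems~\ref{t-gjs} and~\ref{t-cc2} for part~(ii). The paper's proof is terser, while you have spelled out the easy monotonicity and special-case observations that the paper leaves implicit.
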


\begin{proof}
We first consider~(i).
If $d\geq 3$, we use Theorem~\ref{t-cc}.
If $d\leq 2$, we use Theorem~\ref{t-kt94}.
We now consider~(ii).
If $d\geq 4$, we use Theorem~\ref{t-gjs}.
If $d\leq 3$, we use Theorem~\ref{t-cc2}.\qed
\end{proof}

\begin{corollary}\label{c-colouringdegree2}
Let~$d$ and~$k$ be two positive integers.
The following two statements hold for graphs of maximum degree at most~$d$.
\begin{enumerate}[(i)]
\item {\sc $k$-List Colouring} and {\sc List $k$-Colouring} are \NP-complete if $k\geq 3$ and $d\geq 3$ and polynomial-time solvable otherwise.
\begin{sloppypar}
\item {\sc $k$-Regular List Colouring} and
$k$-{\sc Precolouring Extension} 
are \NP-complete if $k\geq 3$ and $d\geq k+1$ and polynomial-time solvable otherwise.
\end{sloppypar}
\end{enumerate}
\end{corollary}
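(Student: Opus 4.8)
The plan is to verify each of the four boundary cases for both parts by invoking the theorems assembled earlier, together with the standard reductions implicit in Figure~\ref{f-col}. For part~(i), the hardness when $k\geq 3$ and $d\geq 3$ follows from Theorem~\ref{t-cc}: the instances produced there are $3$-regular planar bipartite graphs (hence maximum degree~$3$) whose lists are all subsets of $\{1,2,3\}$, so this is simultaneously a hard instance of {\sc $3$-List Colouring} and of {\sc List $3$-Colouring}, and padding with extra colours and extra pendant structure lifts this to every $k\geq 3$. For the polynomial cases of~(i), if $d\leq 2$ we apply Theorem~\ref{t-kt94} ({\sc List Colouring} is already polynomial on maximum degree at most~$2$, so {\sc $k$-List Colouring} and {\sc List $k$-Colouring} are too); if $k\leq 2$, then {\sc $k$-List Colouring} is a restriction of {\sc $2$-List Colouring}, which is polynomial by Theorem~\ref{t-old}, and {\sc List $k$-Colouring} for $k\leq 2$ is likewise a restriction of {\sc $2$-List Colouring} (every list lies in $\{1,2\}$), so Theorem~\ref{t-old} again applies.

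For part~(ii), the two problems {\sc $k$-Regular List Colouring} and {\sc $k$-Precolouring Extension} are handled in parallel. First I would dispose of the polynomial side. If $k\leq 2$, then {\sc $k$-Regular List Colouring} is a restriction of {\sc $2$-List Colouring} and is polynomial by Theorem~\ref{t-old}, while {\sc $k$-Precolouring Extension} for $k\leq 2$ reduces to {\sc $2$-List Colouring} by replacing each vertex's allowed set (its precoloured singleton, or $\{1,2\}$ otherwise) by its list, so Theorem~\ref{t-old} applies. If instead $d\leq k$, then {\sc $k$-Precolouring Extension} is polynomial by our new Theorem~\ref{t-precol}; for {\sc $k$-Regular List Colouring} with $d\leq k$, I would argue componentwise using Vizing's list-colouring form of Brooks' Theorem (Theorem~\ref{t-vizing}): a connected component of maximum degree at most~$k$ that is neither a cycle nor a complete graph is automatically $k$-choosable, hence colourable under the given $k$-regular list assignment, while for the finitely many exceptional component types (cycles and complete graphs $K_m$ with $m\leq k+1$) list-colourability is decidable in polynomial time either directly or via Theorem~\ref{t-kt94} — note that when $d\leq k$ the only complete-graph components are $K_m$ with $m\leq k$, which are trivially $k$-choosable, and cycles have maximum degree~$2$, covered by Theorem~\ref{t-kt94}. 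The main subtlety here is isolating exactly which components can fail to be $k$-choosable so that the reduction to Theorem~\ref{t-vizing} is airtight; this is the step I expect to require the most care.

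It remains to prove \NP-completeness of both problems when $k\geq 3$ and $d\geq k+1$. For {\sc $k$-Precolouring Extension} this follows from Kratochv\'il's result cited in the excerpt: {\sc $3$-Precolouring Extension} is \NP-complete on planar bipartite graphs, and such graphs can be taken of bounded degree and then padded up to degree exactly $k+1$ and colour number exactly~$k$ by attaching cliques and pendant vertices with fixed precolours; when $k\geq 4$ one instead starts from Theorem~\ref{t-gjs} (planar $3$-colouring on maximum degree~$4$, hence {\sc $3$-Precolouring Extension} on such graphs) and pads analogously, always keeping the degree at $k+1$. For {\sc $k$-Regular List Colouring} with $k\geq 4$, the case $k=4$, $d=5$ already follows from Corollary~\ref{t-main1} (planar, hence bounded by the planar degree bound inherited from Theorem~\ref{t-general1}'s construction), and general $k\geq 4$ follows by the padding argument that adds $k-4$ dummy colours shared by a clique gadget to raise every list to size exactly~$k$ while raising the degree to exactly $k+1$. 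The genuinely new ingredient is $k=3$, $d=4$: here I would use Theorem~\ref{t-general} with ${\cal H}=\{K_3\}$ (or directly Corollary~\ref{t-main2}), noting that the gadget construction in the proof of Theorem~\ref{t-general} can be carried out so that the resulting graph has maximum degree~$4$ — the hard part will be checking that the particular non-$3$-choosable planar triangle-free graph used as a gadget, once the chosen edge is deleted and copies are glued in, keeps every vertex of degree at most~$4$, which one verifies by inspecting the small explicit example (for instance the Mirzakhani-type or Voigt-type construction) and bounding the degree increase caused by the gluing.
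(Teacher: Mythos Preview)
Your treatment of part~(i) and of the polynomial cases of part~(ii) is essentially the paper's argument. The gap is in your hardness argument for {\sc $k$-Regular List Colouring} when $k\geq 3$ and $d\geq k+1$.

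For $k=3$, $d\geq 4$ you propose to route through Theorem~\ref{t-general} (or Corollary~\ref{t-main2}) and then argue that the gadget construction can be arranged to have maximum degree~$4$. This is both unnecessary and will not work as stated: in the construction the vertex~$u$ of~$F'$ is attached to or identified with vertices of the base graph~$G$, so degrees add; as the paper itself observes in its Conclusions, the degree bound one gets this way is~$4d$, where~$d$ is the maximum degree of the non-choosable graph~$F$, not~$4$. What you have overlooked is the trivial direct route already recorded in Figure~\ref{f-col}: {\sc $3$-Colouring} \emph{is} {\sc $3$-Regular List Colouring} with every list equal to~$\{1,2,3\}$, so Theorem~\ref{t-gjs} gives the $k=3$ case immediately, with no gadgets.

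For $k\geq 4$ your plan to invoke Corollary~\ref{t-main1} fails for the same reason: planarity does not bound the degree, and nothing in the proof of Theorem~\ref{t-general1} yields maximum degree~$5$. The paper instead pads directly from Theorem~\ref{t-gjs}: start with a graph~$G$ of maximum degree~$4$ with list $\{1,\ldots,k\}$ on every vertex, attach to each vertex~$v$ new pendant vertices $x^v_1,\ldots,x^v_{k-3}$ with $L(x^v_i)=\{i,k+1,\ldots,2k-1\}$, and make each~$x^v_i$ the centre of a $(k+1)$-clique whose other~$k$ vertices carry the list $\{k+1,\ldots,2k\}$. These cliques force colour~$i$ on~$x^v_i$, which forces~$v$ into~$\{1,2,3\}$; the list assignment is $k$-regular and the maximum degree is~$k+1$. (Your padding for {\sc $k$-Precolouring Extension} is fine and matches the paper's; only the regular-list side goes astray.)
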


\begin{proof}
We first consider~(i).
If $k\geq 3$ and $d\geq 3$, we use Theorem~\ref{t-cc}.
If $k\leq 2$ or $d\leq 2$, we use Theorems~\ref{t-old} or~\ref{t-kt94}, respectively.

We now consider~(ii).
We start with the hardness cases and so let $k\geq 3$ and $d\geq k+1$.

First consider $k$-{\sc Precolouring Extension}. 
Theorem~\ref{t-gjs} implies that $3$-{\sc Colouring} is \NP-complete for graphs of maximum degree at most~$d$ for all $d\geq 4$.
The $k=3$ case follows immediately from this result. 
Suppose $k\geq 4$ and $d \geq k+1$. 
Consider a graph~$G$ of maximum degree~$4$.
For each vertex~$v$, we add $k-3$ new vertices $x^v_1,\ldots,x^v_{k-3}$ and edges $vx^v_1,\ldots,vx^v_{k-3}$.
Let~$G'$ be the resulting graph.
Note that~$G'$ has maximum degree at most $4+k-3=k+1\leq d$.
We define a precolouring~$c$ on the newly added vertices by assigning colour $i+3$ to each~$x^v_i$.
Then~$G'$ has a $k$-colouring extending~$c$ if and only if~$G$ has a $3$-colouring.

Now consider {\sc $k$-Regular List Colouring}.
The $k=3$ case follows immediately from Theorem~\ref{t-gjs}.
Suppose $k\geq 4$ and $d \geq k+1$. 
Consider a graph~$G$ of maximum degree~$4$.
We define the list $L(v)=\{1,\ldots,k\}$ for each vertex $v\in V(G)$.
For each vertex~$v$, we add $k-3$ new vertices $x^v_1,\ldots,x^v_{k-3}$ and edges $vx^v_1,\ldots,vx^v_{k-3}$.
We define the list $L(x^v_i)=\{i,k+1,k+2,\ldots,2k-1\}$ for each~$x^v_i$.
For each vertex~$x^v_i$, we also add~$k$ new vertices $w_1(x^v_i),\ldots,w_k(x^v_i)$ and edges such that $x^v_i,w_1(x^v_i),\ldots,w_k(x^v_i)$ form a clique (on $k+1$ vertices).
We define the list $L(w_j(x^v_i))=\{k+1,\ldots,2k\}$ for each~$w_j(x^v_i)$.
Let~$G'$ be the resulting graph. Note that~$G'$ has maximum degree at most $k+1$ and that the resulting list assignment~$L$ is a $k$-regular list assignment of~$G'$. 
Then~$G'$ has a $k$-colouring respecting~$L$ if and only if~$G$ has a $3$-colouring.

We continue with the polynomial-time solvable cases.
If $k\leq 2$, the result follows from Theorem~\ref{t-old}.
Now suppose that $k\geq 3$ and $d\leq k$.
The result for {\sc $k$-Precolouring Extension} follows from Theorem~\ref{t-precol}.
We now consider the {\sc $k$-Regular List Colouring} problem.
Let $(G,L)$ be an instance.
Since {\sc $k$-Regular List Colouring} can be solved component-wise, we may assume that~$G$ is connected.
If~$G$ is a cycle, we use Theorem~\ref{t-kt94}.
If~$G$ is a complete graph, then we use the folklore result that {\sc List Colouring} is polynomial-time solvable on complete graphs (see~\cite{GPS14} for a proof).
In all other cases we apply Theorem~\ref{t-vizing}.\qed
\end{proof}

Note that Corollary~\ref{c-colouringdegree2} does not contain a dichotomy for $k$-{\sc Colouring} restricted to graphs of maximum degree at most~$d$.
A full classification of this problem is open, but a number of results are known.
Molloy and Reed~\cite{MR14} classified the complexity for all pairs $(k,d)$ for sufficiently large~$d$.
Emden-Weinert~et~al.~\cite{EHK98} proved that {\sc $k$-Colouring} is \NP-complete for graphs of maximum degree at most~$k+\lceil\sqrt{k}\rceil-1$. 
It follows from Brooks' Theorem~\cite{Br41} that for every integer $k\geq 1$, {\sc $k$-Colouring} is polynomial-time solvable for graphs of maximum degree~$k$.
Combining this observation with the result of~\cite{EHK98} means that the smallest open case is when $k=5$ and $d=6$.

\section{Proofs}\label{s-proofs}

In Section~\ref{s-proof1} we prove Theorems~\ref{t-general1} and~\ref{t-main1}.
In Section~\ref{s-proof2} we prove Theorem~\ref{t-general}, whereas in Section~\ref{s-proof4} we prove Theorem~\ref{t-main4}, and in Section~\ref{s-proof10} we prove Theorem~\ref{t-3p1}.

\subsection{The Proofs of Theorems~\ref{t-general1} and~\ref{t-main1}}\label{s-proof1}

We need an additional result.

\begin{theorem}\label{t-new}
For every integer~$p\geq 3$, {\sc $3$-List Colouring} is \NP-complete for planar graphs of girth at least~$p$ that have
a list assignment in which each list is one of $\{1,2\}$, $\{1,3\}$, $\{2,3\}$, $\{1,2,3\}$.
\end{theorem}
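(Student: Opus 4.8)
The plan is to reduce from the problem shown \NP-complete in Theorem~\ref{t-cc}, namely {\sc List Colouring} for planar graphs all of whose lists belong to $\{\{1,2\},\{1,3\},\{2,3\},\{1,2,3\}\}$ (we will use only planarity and the restriction on the lists, not the $3$-regularity, bipartiteness, or the neighbourhood condition). Given such an instance $(G,L)$ and an integer $p\geq 3$, I would construct $(G',L')$ by replacing every edge of $G$ by a high-girth planar ``edge gadget'', keeping the list restriction intact, and then argue that $(G',L')$ is a yes-instance if and only if $(G,L)$ is. Membership in \NP\ is immediate, so the work is in the reduction.

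The gadget replacing an edge $uv$ consists of two internally disjoint $u$--$v$ paths routed inside a thin tube around where the edge $uv$ was (so planarity is preserved), each having the same even number $s$ of internal vertices; the internal vertices of the first path all receive the list $\{1,2\}$ and those of the second all receive $\{1,3\}$. The technical heart of the proof is the claim that such a gadget behaves \emph{exactly} like the edge $uv$: projected onto the colours of $u$ and $v$, the set of colourings respecting the gadget's lists is $\{(i,j):i\in L(u),\,j\in L(v),\,i\neq j\}$. This follows from a short propagation argument: along a path with $s$ internal vertices all of list $\{1,2\}$, if $c(u)\in\{1,2\}$ the colours are forced to alternate and, since $s$ is even, return to $c(u)$ at the last internal vertex, forcing $c(v)\neq c(u)$; if $c(u)=3$ the third colour can be ``absorbed'' and $c(v)$ is unconstrained. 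Hence such a path forbids precisely the pairs $(1,1)$ and $(2,2)$ (and an \emph{odd}-length path would instead force $c(u)=c(v)$, which is why $s$ must be even; also a single path cannot by itself forbid all three monochromatic pairs, since the unused colour is always available for its internal vertices). Likewise the $\{1,3\}$-path forbids exactly $(1,1)$ and $(3,3)$, so the two paths together forbid precisely $(1,1),(2,2),(3,3)$ — i.e. they enforce $c(u)\neq c(v)$ and nothing else — and conversely any colouring of $u,v$ with distinct colours extends over both paths.

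It then remains to verify the routine points. The graph $G'$ is planar, is built in polynomial time, and uses only lists in $\{\{1,2\},\{1,3\},\{2,3\},\{1,2,3\}\}$ (original vertices keep their lists; new vertices get $\{1,2\}$ or $\{1,3\}$). Correctness of the reduction is immediate from the gadget property: restricting a list colouring of $G'$ to $V(G)$ yields a list colouring of $G$ because each gadget forces its two endpoints to differ, and a list colouring of $G$ extends to every gadget because its endpoints receive distinct colours. Finally, $G'$ is a subdivision of the multigraph obtained from $G$ by doubling every edge, and any cycle meeting a gadget must traverse at least two full gadget paths (each internal vertex has degree $2$ within its gadget, which attaches to the rest of $G'$ only at its two endpoints); hence every cycle of $G'$ has length at least $2(s+1)$, and taking, say, $s=2p$ gives girth at least $4p+2\geq p$. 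The only genuinely delicate choices are the parity of $s$ and the two list types in the gadget; everything else is bookkeeping.
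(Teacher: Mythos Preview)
Your proof is correct. Both you and the paper reduce from Theorem~\ref{t-cc} by subdividing edges to push up the girth while preserving colourability, but the gadgets differ. The paper exploits the full strength of Theorem~\ref{t-cc}: since every neighbour of a vertex with a size-$3$ list has a size-$2$ list, every edge has at least one endpoint with a two-element list $\{a,b\}$; replacing that edge by a \emph{single} path with an odd number of edges, all of whose internal vertices receive $\{a,b\}$, already forbids the only two monochromatic pairs that can arise at that edge. You deliberately discard the neighbourhood condition and instead use two parallel paths (with internal lists $\{1,2\}$ and $\{1,3\}$) per edge, so that together they rule out all three monochromatic pairs $(1,1),(2,2),(3,3)$. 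The paper's route is leaner (one path per edge, and the resulting graph is even a subdivision of $G$ itself rather than of its doubled multigraph); yours is more self-contained, since it works starting from any planar instance with lists in $\{\{1,2\},\{1,3\},\{2,3\},\{1,2,3\}\}$ without needing the extra structural hypothesis. Either way the argument is short and yields the same theorem.
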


\begin{proof}
By Theorem~\ref{t-cc}, {\sc List Colouring} is \NP-complete for $3$-regular planar bipartite graphs that have
a list assignment in which each list is one of $\{1,2\}$, $\{1,3\}$, $\{2,3\}$, $\{1,2,3\}$ and
all the neighbours of each vertex with three colours in its list have two colours in their lists.
We modify the hardness construction as follows.
Note that for each edge at least one of the incident vertices has a list of size~$2$.
We replace each edge by a path on an odd number of edges in such a way that the girth of the graph obtained is at least~$p$.
The new vertices on the path are all given the same list of size~$2$, identical to the list on one or other of the end-vertices.  It is readily seen that these modifications do not affect whether or not the graph can be coloured.\qed
\end{proof}

We are now ready to prove Theorem~\ref{t-general1}, which we restate below.

\medskip
\noindent
{\bf Theorem~\ref{t-general1} (restated).}
{\it Let~${\cal H}$ be a finite set of $2$-connected planar graphs.
Then $4$-{\sc Regular List Colouring} is \NP-complete for planar ${\cal H}$-subgraph-free graphs if
there exists a planar ${\cal H}$-subgraph-free graph that is not $4$-choosable.}

\begin{proof}
The problem is readily seen to be in \NP. Let~$F$ be a planar ${\cal H}$-subgraph-free graph with a $4$-regular list assignment~$L$ such that~$F$ has no colouring respecting~$L$. We may assume that~$F$ is minimal (with respect to the subgraph relation). In particular, this means that~$F$ is connected. Let~$r$ be the length of a longest cycle in any graph of~${\cal H}$. We reduce from the problem of $3$-{\sc List Colouring} restricted to planar graphs of girth at least $r+1$ in which each vertex has list
$\{1,2\}$, $\{1,3\}$, $\{2,3\}$ or $\{1,2,3\}$.
This problem is \NP-complete by Theorem~\ref{t-new}.
Let a graph~$G$ and list assignment~$L_G$ be an instance of this problem.
We will construct a planar ${\cal H}$-subgraph-free graph~$G'$ with a $4$-regular list assignment~$L'$ such that~$G$ has a colouring that respects~$L_G$ if and only if~$G'$ has a colouring that respects~$L'$.

If every pair of adjacent vertices in~$F$ has the same list, then the problem of finding a colouring that respects~$L$ is just the problem of finding a $4$-colouring which, by the Four Colour Theorem~\cite{AH89}, we know is possible.
Thus we may assume that, on the contrary, there is an edge $e=uv$ such that $|L(u)\cap L(v)|\leq 3$.
Let $F'=F-e$.
Then, by minimality, $F'$ has at least one colouring respecting~$L$, and moreover,
for any colouring of~$F'$ that respects~$L$, $u$ and~$v$ are coloured alike (otherwise we would have a colouring of~$F$ that respects~$L$).
Let~$T$ be the set of possible colours that can be used on~$u$ and~$v$ in colourings of~$F'$ that respect~$L$ and let $t=|T|$.
As $T \subseteq L(u)\cap L(v)$, we have $1 \leq t \leq 3$.
Up to renaming the colours in~$L$, we can build copies of~$F'$ with $4$-regular list assignments such that 
\begin{enumerate}[(i)]
\item the set $T$ is any given list of colours of size~$t$, and 
\item the vertex corresponding to~$u$ has any given list of~$4$ colours containing~$T$.
\end{enumerate}
We will implicitly make use of this several times in the remainder of the proof.

We say that a vertex~$w$ in~$G$ is a {\em bivertex} or {\em trivertex} if~$|L_G(w)|$ is~$2$ or~$3$, respectively.  We construct a planar ${\cal H}$-subgraph-free graph~$G'$ and a $4$-regular list assignment~$L'$ as follows.

First suppose that $t=1$.
For each bivertex~$w$ in~$G$, we do as follows.
We add two copies of~$F'$ to~$G$, which we label~$F_1(w)$ and~$F_2(w)$. 
The vertex in~$F_i(w)$ corresponding to~$u$ is labelled~$u_i^w$ for $i \in \{1,2\}$ and we set $U^w=\{u_1^w,u_2^w\}$.
We add the edges~$wu_1^w$ and~$wu_2^w$. We give list assignments to the vertices of~$F_1(w)$ and~$F_2(w)$ such that $T = \{4\}$ for~$F_1$ and $T = \{5\}$ for~$F_2$. We let $L'(w) = L_G(w) \cup \{4,5\}$.
For each trivertex~$w$ in~$G$, we do as follows.
We add one copy of~$F'$ to~$G$, which we label~$F_1(w)$. 
The vertex in~$F_1(w)$ corresponding to~$u$ is labelled~$u_1^w$ and we set $U^w=\{u_1^w\}$.
We add the edge~$wu_1^w$. We give list assignments to vertices of $F_1(w)$ such that $T = \{4\}$ for $F_1$. We let $L'(w) = L_G(w) \cup \{4\}$.
This completes the construction of~$G'$ and~$L'$ when $t=1$.

Now suppose that $t=2$.
Let $s = r$ if~$r$ is even and $s = r+1$ if~$r$ is odd (so~$s$ is even in both cases).
For each bivertex~$w$ in~$G$, we do as follows.
We add a copy of~$F'$ to~$G$, which we label~$F_1(w)$, and identify the vertex in~$F_1(w)$ corresponding to~$u$ with~$w$. 
We give list assignments to vertices of $F_1(w)$ such that $T = L_G(w)$ 
and $L'(w) = L_G(w) \cup \{4,5\}$. 
For each trivertex~$w$ in~$G$, we do as follows.
We add~$s$ copies of~$F'$ to~$G$ which we label~$F_i(w)$, $1 \leq i \leq s$.
The vertex in~$F_i(w)$ corresponding to~$u$ is labelled~$u_i^w$.
Let $U^w=\{u^w_i \mid 1 \leq i \leq s\}$.
Add edges such that
the union of~$w$ and~$U^w$ induces a cycle on $s+1$ vertices.
For all $1 \leq i \leq s$, we give list assignments to vertices of~$F_i(w)$ such that $T = \{4,5\}$. 
We let $L'(w) = \{1,2,3,4\}$.
This completes the construction of~$G'$ and~$L'$ when $t=2$.

Now suppose that $t=3$.
For each bivertex~$w$ in~$G$, we do as follows.
We add two copies of~$F'$ to~$G$ which we label~$F_1(w)$ and~$F_2(w)$,
such that  for $i \in \{1,2\}$, the vertex in~$F_i(w)$ 
corresponding to~$u$ 
is
identified with~$w$.
We give list assignments to vertices of~$F_1(w)$ and~$F_2(w)$ such that $T = L_G(w) \cup \{4\}$ for~$F_1(w)$, $T = L_G(w) \cup \{5\}$ for~$F_2(w)$ and $L'(w) = L_G(w) \cup \{4,5\}$.
For each trivertex~$w$ in~$G$, we do as follows.
We add a copy of~$F'$ to~$G$ which we label~$F_1(w)$, such that the vertex in~$F_1(w)$ corresponding to~$u$ is 
identified with~$w$.
We give list assignments to the vertices of~$F_1(w)$ such that $T = \{1,2,3\}$ and $L'(w) = \{1,2,3,4\}$.
This completes the construction of~$G'$ and~$L'$ when $t=3$.

Note that~$G'$ is planar.
Suppose that there is a subgraph~$H$ in~$G'$ that is isomorphic to a graph of~${\cal H}$.
Since~$F$ is ${\cal H}$-subgraph-free, and since~$F'$ is obtained from~$F$ by removing one edge, $F'$ is also ${\cal H}$-subgraph-free.
Therefore for all~$w$, $H$ is not fully contained in any~$F_i(w)$.
Since~$H$ is $2$-connected and since for all~$w$ only one vertex of any~$F_i(w)$ has a neighbour outside of~$F_i(w)$, we find that~$H$ has at most one vertex in each~$F_i(w)$.
In particular, $H$ cannot contain any vertex of any~$F_i(w)$ in which the vertex corresponding to~$u$ has been attached to~$w$ (as opposed to being identified with~$w$); this includes the case when the union of~$w$ and~$U^w$ induces a cycle on $s+1$ vertices.
Hence we have found that~$H$ is a subgraph of~$G$, which contradicts the fact that~$G$ has girth at least $r+1$.
Therefore~$G'$ is ${\cal H}$-subgraph-free.

Note that in any colouring of~$G'$ that respects~$L'$, each copy of~$F'$ must be coloured such that the vertices corresponding to~$u$ and~$v$ have the same colour, which must be one of the colours from the corresponding set~$T$.
If $t=1$ and~$w$ is a trivertex, this means that the unique neighbour of~$w$ in~$U^w$ must be coloured with colour~$4$, so~$w$ cannot be coloured with colour~$4$.
Similarly, if $t=1$ and~$w$ is a bivertex or $t=2$ and~$w$ is a trivertex then the two neighbours of~$w$ in~$U^w$ must be coloured with colours~$4$ and~$5$, so~$w$ cannot be coloured with colours~$4$ or~$5$.
If $t=2$ and~$w$ is a bivertex or $t=3$ and~$w$ is a trivertex then~$w$ belongs to a copy of~$F'$ with $T=L_G(w)$, so~$w$ cannot have colour~$4$ or~$5$.
If $t=3$ and~$w$ is a bivertex then~$w$ belongs to two copies of~$F'$, one with $T=L_G(w) \cup \{4\}$ and one with $T= L_G(w) \cup \{5\}$.
Therefore, $w$ must be coloured with a colour from the intersection of these two sets, that is it must be coloured with a colour from~$L_G(w)$.
Therefore none of the vertices of~$G$ can be coloured~$4$ or~$5$.
Thus the problem of finding a colouring of~$G'$ that respects~$L'$ is equivalent to the problem of finding a colouring of~$G$ that respects~$L_G$.
This completes the proof.\qed
\end{proof}

We will now prove Theorem~\ref{t-main1}.

\medskip
\noindent
{\bf Theorem~\ref{t-main1} (restated).}
{\it $4$-{\sc Regular List Colouring} is \NP-complete for planar graphs even if every list contains four colours from $\{1,2,3,4,5\}$.}

\begin{proof}
Recall that there exists  a planar graph~$F^*$ with a $4$-regular list assignment~$L$ in which each list~$L(u)$ contains four colours from
$\{1,2,3,4,5\}$ such that~$F^*$ has no colouring respecting~$L$~\cite{VW97}.
We may assume without loss of generality that~$F^*$ is minimal (with respect to the subgraph relation) and use~$F^*$ as the graph~$F$ in the proof of Theorem~\ref{t-general1}.
This means that we give each vertex of the graph~$G'$ in the proof of Theorem~\ref{t-general1}  a list of colours from $\{1,2,3,4,5\}$.
The result follows.\qed
\end{proof}

\subsection{The Proof of Theorem~\ref{t-general}}\label{s-proof2}

\noindent
{\bf Theorem~\ref{t-general} (restated).}
{\it
Let~${\cal H}$ be a finite set of $2$-connected planar graphs.
Then $3$-{\sc Regular List Colouring} is \NP-complete for planar
${\cal H}$-subgraph-free graphs if there exists a planar ${\cal H}$-subgraph-free graph that is
not $3$-choosable.}

\begin{proof}
The problem is readily seen to be in \NP.
Every graph in~${\cal H}$ is $2$-connected, and therefore contains a cycle.
Let~$r$ be the length of a longest cycle in any graph of~${\cal H}$.
By assumption, there exists a planar graph~$F$ and $3$-regular list assignment~$L$ such that~$F$ is ${\cal H}$-subgraph-free and has no colouring respecting~$L$. We may assume that~$F$ is minimal by removing edges and vertices until any further removal would give a graph with a colouring respecting~$L$. In particular, we note that~$F$ is connected.

We distinguish two cases.

\medskip
\noindent
{\bf Case~1.}~$L(v)$ is the same for every vertex~$v$ in~$F$.\\
Then we may assume without loss of generality that $L(v) = \{1,2,3\}$ for all~$v$.
We reduce from {\sc $3$-Colouring} which is \NP-complete even for planar graphs by Theorem~\ref{t-gjs}.
Let~$G$ be a planar graph.
We will construct a planar ${\cal H}$-subgraph-free graph~$G'$ as follows.

Let $e = uv$ be an edge of~$F$.
Let $F' = F-e$.
Then, by minimality, $F'$ has at least one colouring respecting~$L$, which must be a $3$-colouring as every list consists of the colours $1,2,3$.
For every $3$-colouring~$c$ of~$F'$, it holds that $c(u) = c(v)$ (otherwise~$c$ would be a colouring of~$F$ that respects~$L$).
Moreover, since we can permute the colours, there is such a colouring~$c$ that colours~$u$ (and thus~$v$) with colour~$i$ for each $i\in\{1,2,3\}$.
Note that in~$F'$ the vertices~$u$ and~$v$ must be at distance at least~$2$ from each other.

Let $s = \left\lceil \frac{r}{6} \right \rceil$.
Assume that the vertices of~$G$ are ordered.
For each edge $xy\in E(G)$ with $x < y$, we do the following:
\begin{enumerate}[(i)]
\item delete~$xy$;
\item add~$s$ copies of~$F'$ labelled $F_1(xy), \ldots, F_s(xy)$ and, for $1 \leq i \leq s$, let~$u_i^{xy}$ and~$v_i^{xy}$ be the vertices in~$F_i(xy)$ corresponding to~$u$ and~$v$;
\item identify~$x$ with~$u_1^{xy}$ and, for $1 \leq i \leq s-1$, identify~$v_i^{xy}$ with~$u_{i+1}^{xy}$;
\item add an edge from~$v_s^{xy}$ to~$y$.
\end{enumerate}
Let~$G'$ be the obtained graph and note that~$G'$ is planar.
Every cycle in~$G'$ that is not contained in a copy of~$F'$ has length at least $3(2s + 1) \ge r+1$, since it corresponds to a cycle in~$G$ and in which every edge has been replaced by~$s$ successive copies of~$F'$ plus an edge.

Suppose that there is a subgraph~$H$ in~$G'$ that is isomorphic to a graph of~${\cal H}$.
Since~$F$ is ${\cal H}$-subgraph-free, and since~$F'$ is obtained from~$F$ by removing one edge, $F'$ must also be ${\cal H}$-subgraph-free.
Therefore~$H$ is not fully contained in a copy of~$F'$.
Since~$H$ is $2$-connected, this implies that there is a cycle in~$H$ that is not fully contained in a copy of~$F'$.
By definition, this cycle has length at most~$r$, a contradiction.

Suppose the graph~$G'$ has a $3$-colouring~$c$.
For each copy of~$F'$, the vertices corresponding to~$u$ and~$v$ must be coloured the same.
For all edges~$xy$ with $x < y$ in~$G$, there is a vertex in~$G'$ coloured the same as~$x$ in~$G$ that is adjacent to~$y$ in~$G'$, so $c(x) \ne c(y)$.  Therefore~$c$ restricted to~$V(G)$ is a $3$-colouring of~$G$.

On the other hand, suppose the graph~$G$ has a $3$-colouring~$c$.
We can extend this $3$-colouring to~$G'$ by doing the following: for all edges~$xy$ with $x < y$ in~$G$, colour every~$F_i(xy)$ in such a way that the vertex corresponding to~$u$ and the vertex corresponding to~$v$ have colour~$c(x)$.
This leads to a $3$-colouring of~$G'$.

\medskip
\noindent
{\bf Case~2.}~$F$ contains two vertices~$u$ and~$v$ with $L(u)\neq L(v)$.\\
As~$F$ is connected, we assume without loss of generality that~$u$ and~$v$ are adjacent; let $e=uv$.

We reduce from the problem of $3$-{\sc List Colouring} restricted to planar graphs of girth at least $r+1$ in which each vertex has list
$\{1,2\}$, $\{1,3\}$, $\{2,3\}$ or $\{1,2,3\}$.
This problem is \NP-complete by Theorem~\ref{t-new}.
Let a graph~$G$ and list assignment~$L_G$ be an instance of this problem.
We will construct a planar ${\cal H}$-subgraph-free graph~$G'$ with a $3$-regular list assignment~$L'$ such that~$G$ has a colouring that respects~$L_G$ if and only if~$G'$ has a colouring that respects~$L'$.

We define $F'=F-e$.
Then, by minimality, $F'$ has at least one colouring respecting~$L$, and moreover,
for any colouring of~$F'$ that respects~$L$, $u$ and~$v$ are coloured alike (otherwise we would have a colouring of~$F$ that respects~$L$).
Let~$T$ be the set of possible colours that can be used on~$u$ and~$v$ in colourings of~$F'$ that respect~$L$ and let $t=|T|$.
As $T \subseteq L(u)\cap L(v)$, we have $1 \leq t \leq 2$.
Let us assume, without loss of generality, that $T \subseteq \{4,5\}$ and that $4 \in T$.

We say that a vertex~$w$ in~$G$ is a bivertex or trivertex if~$|L(w)|$ is~$2$ or~$3$, respectively.
We construct a planar ${\cal H}$-free graph~$G'$.

First suppose that $t=1$.
For each bivertex~$w$ in~$G$, we do as follows.
We add a copy of~$F'$ to~$G$ which we label~$F(w)$.
The vertex in~$F(w)$ corresponding to~$u$ is labelled~$u^w$ and we set $U^w=\{u^w\}$.
We add the edge~$wu^w$.
This completes the construction of~$G'$ when $t=1$.

Now suppose that $t=2$.
Let $s = r$ if~$r$ is even and $s = r+1$ if~$r$ is odd (so~$s$ is even in both cases).
For each bivertex~$w$ in~$G$, we add~$s$ copies of~$F'$ to~$G$ which we label~$F_i(w)$, $1 \leq i \leq s$.
The vertex in~$F_i(w)$ corresponding to~$u$ is labelled~$u_i^w$.
Let $U^w=\{u^w_i \mid 1 \leq i \leq s\}$.
Add edges such that, for each bivertex~$w$ in~$G$, the union of~$w$ and~$U^w$ induces a cycle on $s+1$ vertices.
This completes the construction of~$G'$ when $t=2$.

Note that~$G'$ is planar, since it is made of planar graphs ($G$ and copies of~$F'$) connected in a way that does not obstruct planarity.
Suppose that there is a subgraph~$H$ in~$G'$ that is isomorphic to a graph of~${\cal H}$.
Since~$F$ is ${\cal H}$-subgraph-free, and since~$F'$ is obtained from~$F$ by removing one edge, $F'$ is also ${\cal H}$-subgraph-free.
Therefore for all~$w$, $H$ is not fully contained in~$F(w)$.
Since~$H$ is $2$-connected and since for all~$w$, only one vertex of~$F(w)$ has a neighbour outside of~$F(w)$, we find that~$H$ has at most one vertex in each~$F(w)$.
This means that~$H$ is a subgraph of~$G$, which contradicts the fact that~$G$ has girth at least $r+1$.
Therefore~$G'$ is ${\cal H}$-subgraph-free.

Now we define a list assignment~$L'$.
We give the vertices of each copy of~$F'$ the same lists as their corresponding vertices in~$F$, and
for each bivertex~$w$ in~$G$, we define $L'(w)=L_G(w)\cup \{4\}$, and for each trivertex~$w$ in~$G$, we define $L'(w)=L_G(w)$.
This gives us the $3$-regular list assignment~$L'$ of~$G'$.

The graph $G' - G$ has a colouring that respects (the restriction of)~$L'$ and we notice that in such a colouring each copy of~$F'$ must be coloured in such a way that, for each bivertex~$w$ in~$G$, the~$t$ vertices of~$U^w$ that are adjacent to~$w$ are coloured with the~$t$ colours of~$T$.
So one of the neighbours of~$w$ in~$U^w$ must be coloured~$4$.
Thus the problem of finding a colouring of~$G'$ that respects~$L'$ is equivalent to the problem of finding a colouring of~$G$ that respects~$L_G$.
The proof is complete.\qed
\end{proof}

\subsection{The Proof of Theorem~\ref{t-main4}}\label{s-proof4}

Theorem~\ref{t-main4} is not quite implied by Theorem~\ref{t-general}.
However, we can adapt the proof of Theorem~\ref{t-general} to prove Theorem~\ref{t-main4}, which we restate below.

\medskip
\noindent
{\bf Theorem~\ref{t-main4} (restated).}
{\it
$3$-{\sc Regular List Colouring} is \NP-complete for planar graphs with no $4$-cycles, no $5$-cycles and no intersecting triangles.}

\begin{proof}
Let~$B$ be the graph on five vertices with two triangles sharing exactly one vertex (this graph is known as the {\it butterfly}).
Consider the previous proof with ${\cal{H}} = \{C_4,C_5,B\}$.
Note that the only problem is that~$B$ is not $2$-connected.

In~\cite{MRW06} an example is given 
of a graph with no $4$-cycle, no $5$-cycle and no intersecting triangles that is not $3$-choosable.
We omit the details of this construction, as the existence of such a graph is sufficient for our proof.
Hence, let~$I$ be the example of~\cite{MRW06},
with~$L$ a $3$-regular list assignment such that there is no colouring of~$I$ respecting~$L$.
We remove edges and vertices from~$I$ until any further removal would give a graph with a colouring respecting~$L$.
This leads to a connected graph~$F$.
There are no four vertices in~$I$ with the same list inducing a connected subgraph, so there are no such four vertices in~$F$.
Therefore in~$F$ there is an edge~$uv$ such that $L(u) \ne L(v)$ (otherwise~$F$ would have at most three vertices, and thus would be $3$-choosable).

Therefore we can skip Case~$1$ and directly adapt the proof in Case~$2$.
Note that the only thing to prove is that in this case~$G'$ does not contain a subgraph isomorphic to~$B$.
Suppose~$H$ is such a subgraph.
Since~$F'$ is $B$-subgraph-free, $H$ cannot be fully contained in~$F(w)$ for any~$w$.
Since no vertex is in two different~$F(w)$
and no vertex of~$F(w)$ has two adjacent neighbours outside~$F(w)$,
 this implies that there is a triangle in~$G$, which is impossible since~$G$ has girth at least~$6$.\qed
\end{proof}

\subsection{The Proof of Theorem~\ref{t-3p1}}\label{s-proof10}

The proof is obtained by a modification of the \NP-hardness construction of $3$-{\sc List Colouring} for $(3P_1,P_1+P_2)$-free graphs from~\cite{GPS14}.
Recall that we included this result in our paper to illustrate that {\sc $k$-Choosability} and {\sc $k$-Regular List Colouring} can have different complexities when restricted to special graph classes. 
Indeed, since {\sc Choosability} is polynomial-time solvable on $3P_1$-free graphs~\cite{GHHP13}, Theorem~\ref{t-3p1} shows that {\sc $k$-Choosability} may even be easier than {\sc $k$-Regular List Colouring}.

\medskip
\noindent
{\bf Theorem~\ref{t-3p1} (restated).}
{\it  {\sc $3$-Regular List Colouring} is \NP-complete for $(3P_1,P_1+P_2)$-free graphs.}

\begin{proof}
The problem is readily seen to belong to~\NP. 
Golovach~et~al.~\cite{GHHP13} showed that $3$-{\sc List Colouring} is \NP-complete for $(3P_1,P_1+P_2)$-free graphs in which every vertex has a list of size~$2$ or~$3$.
Let~$G$ be such an instance. We add three new vertices $s, t, u$ to~$G$. We make 
$s,t,u$ adjacent to each other and to each original vertex of~$G$. This results in a $(3P_1,P_1+P_2)$-free graph~$G'$.
We take three new colours $1,2,3$ and set $L(s)=L(t)=L(u)=\{1,2,3\}$. This forces colour~$1$ to be used to colour one of $s,t$ or~$u$.
Then all that remains is to add colour~$1$ to the list of every vertex of~$G$ that has a list of size~$2$.\qed
\end{proof}

\section{Conclusions}
As well as presenting some dichotomies for a number of colouring problems for graphs with bounded maximum degree, we have given several dichotomies for the $k$-{\sc Regular List Colouring} problem restricted to subclasses of planar graphs. 
In particular we showed \NP-hardness of the cases $k=3$ and $k=4$ restricted to planar~${\cal H}$-subgraph-free graphs for several sets~${\cal H}$ of $2$-connected planar graphs.
Our method implies that for such sets~${\cal H}$ it suffices to find a counterexample to $3$-choosability or to $4$-choosability, respectively.
It is a natural to ask whether we can determine the complexity of $3$-{\sc Regular List Colouring} and $4$-{\sc Regular List Colouring} for any class of planar ${\cal H}$-subgraph-free graphs. 
However, we point out that even when restricting~${\cal H}$ to be a finite set of $2$-connected planar graphs, this would be very hard (and beyond the scope of this paper) as it would require solving several long-standing conjectures in the literature.
For example, when~${\cal H} = \{C_4,C_5,C_6\}$, Montassier~\cite{Mo06} conjectured that every planar ${\cal H}$-subgraph-free graph is $3$-choosable.

A drawback of our method is that we need the set of graphs~${\cal H}$ to be $2$-connected. 
If we forbid a set~${\cal H}$ of graphs that are not $2$-connected, the distinction between polynomial-time solvable and \NP-complete cases is not clear, and both cases may occur even if we forbid only one graph. We illustrate this below with an example.

\medskip
\noindent
{\it Example.} Let~${\cal H}$ contain only the star~$K_{1,r}$ for some $r\geq 2$.
Note that~$K_{1,r}$-subgraph-free graphs are exactly those graphs that have maximum degree at most $r-1$. 
Hence, if $r=3$, then  $3$-{\sc Regular List Colouring} is polynomial-time solvable due to Theorem~\ref{t-kt94}.
However, there exist larger values of~$r$ for which the problem is \NP-complete. In order to see this we adapt the
proof of 
Theorem~\ref{t-general}. The hardness reductions in this proof multiply the maximum degree of our instances by some constant~$d$ that is at most the maximum degree of the no-instance~$F$.
By Theorems~\ref{t-gjs} and~\ref{t-cc}, the problems we reduce from are \NP-complete even for graphs with maximum degree at most~$4$.
Hence, we have proven the following:
if~${\cal H}$ is a finite set of $2$-connected planar graphs and~$F$ is a non-$3$-choosable planar ${\cal H}$-subgraph-free graph with maximum degree~$d$, then $3$-{\sc Regular List Colouring} is \NP-complete on planar ${\cal H}$-subgraph-free graphs with maximum degree at most~$4d$.
We can take $F=K_4$ to deduce that $3$-{\sc Regular List Colouring} is \NP-complete on planar $K_{1,13}$-subgraph-free graphs.

\medskip
\noindent
{\it Acknowledgements.} 
We thank Zden\v{e}k Dvo\v{r}\'ak and Steven Kelk for helpful comments on earlier versions of this paper.


\begin{thebibliography}{99}
\bibitem{AT92}
N. Alon and M. Tarsi, Colorings and orientations of graphs, Combinatorica 12 (1992) 125--134.

\bibitem{AH89}
K. Appel and W. Haken, Every planar map is four colorable, Contemporary Mathematics 89, AMS Bookstore, 1989.

\bibitem{BDFJP17}
M.~Bonamy, K.~K. Dabrowski, C.~Feghali, M.~Johnson, and D.~Paulusma,
Recognizing graphs close to bipartite graphs,
Proc. MFCS 2017, LIPIcs 83, 70:1--70:14,
Full version: arXiv:1707.09817.

\bibitem{Br41}
R.~L. Brooks, On colouring the nodes of a network, Mathematical Proceedings of the Cambridge Philosophical
Society 37 (1941) 194--197.

\bibitem{CMR12}
M. Chen, M. Montassier and A. Raspaud,
Some structural properties of planar graphs and their applications to 3-choosability,
Discrete Mathematics 312 (2012) 362--373.

\bibitem{CC06}
M. Chleb\'ik and J. Chleb\'ikov\'a,
Hard coloring problems in low degree planar bipartite graphs,
Discrete Applied Mathematics 154 (2006) 1960--1965.

\bibitem{C14}
M. Chudnovsky,
Coloring graphs with forbidden induced subgraphs, Proc. ICM 2014 vol IV, 291--302.

\bibitem{DDJP15}
K.K. Dabrowski, F. Dross, M. Johnson and D. Paulusma,
Filling the complexity gaps for colouring planar and bounded degree graphs, Proc. IWOCA 2015, LNCS 9538, 100--111.

\bibitem{DMP16}
F.~Dross, M.~Montassier, and A.~Pinlou,
Partitioning a triangle-free planar graph into a forest and a forest of bounded degree,
European Journal of Combinatorics 66 (2017) 81--94.

\bibitem{DK13}
Z. Dvo\v{r}\'ak and K. Kawarabayashi,
List-coloring embedded graphs, Proc. SODA 2013, 1004--1012.

\bibitem{DLS09}
Z. Dvo\v{r}\'ak, B. Lidick\'y and R. \v{S}krekovski,
Planar graphs without 3-, 7-, and 8-cycles are 3-choosable,
Discrete Mathematics 309 (2009) 5899--5904.

\bibitem{DT14}
Z. Dvo\v{r}\'ak and R. Thomas,
List-coloring apex-minor-free graphs, Manuscript, arXiv:1401.1399.

\bibitem{EHK98}
T. Emden-Weinert, S. Hougardy and B. Kreuter, Uniquely colourable graphs and the hardness of colouring graphs of large girth, Combinatorics, Probability \& Computing 7 (1998) 375--386.

\bibitem{ERT79}
P.~Erd{\H{o}}s, A.~L. Rubin, and H.~Taylor, Choosability in graphs,
Proceedings of the West Coast Conference on Combinatorics, Graph
Theory and Computing (Humboldt State Univ., Arcata, Calif.,
1979), Congress. Numer., XXVI, Winnipeg, Man., 1980, Utilitas Math.,
pp.~125--157.

\bibitem{EMOP13}
L. Esperet, M. Montassier, P. Ochem, and A. Pinlou,
A complexity dichotomy for the coloring of sparse graphs,
Journal of Graph Theory 73 (2013), 85--102.

\bibitem{GJS74}
M.R. Garey, D.S. Johnson, and L.J. Stockmeyer, Some simplified \NP-complete graph problems, Proc. STOC 1974, 47--63.

\bibitem{GHHP13}
P.A. Golovach, P. Heggernes, P. van 't Hof and D. Paulusma, Choosability on $H$-free graphs, Information Processing Letters 113 (2013) 107--110.

\bibitem{GJPS}
P.A. Golovach, M. Johnson, D. Paulusma, and J. Song, A survey on the computational complexity of colouring graphs with forbidden subgraphs, Journal of Graph Theory 84 (2017) 331--363.

\bibitem{GPS14}
P.A. Golovach, D. Paulusma and J. Song, Closing complexity gaps for coloring problems on $H$-free graphs, Information and Computation 237 (2014) 204--214.

\bibitem{Gu96}
S. Gutner, The complexity of planar graph choosability,
Discrete Mathematics 159 (1996) 119--130.

\bibitem{Kr93}
J. Kratochv\'il, Precoloring extension with fixed color bound. Acta Mathematica Universitatis Comenianae 62 (1993) 139--153.

\bibitem{KT94}
J. Kratochv\'il and Z. Tuza, Algorithmic complexity of list colourings, Discrete Applied Mathematics 50 (1994) 297--302.

\bibitem{LXL99}
P.C.B. Lam, B. Xu and J. Liu, The 4-choosability of plane graphs without 4-cycles,
Journal of Combinatorial Theory, Series B 76 (1999) 117--126.

\bibitem{MR14}
M. Molloy and B. Reed,
Colouring graphs when the number of colours is almost the maximum degree,
Journal of Combinatorial Theory, Series B 109 (2014) 134--195.

\bibitem{Mo06}
M. Montassier,
A note on the not 3-choosability of some families of planar graphs,
Information Processing Letters 99 (2006) 68--71.

\bibitem{MRW06}
M. Montassier, A. Raspaud and W. Wang,
Bordeaux 3-color conjecture and 3-choosability,
Discrete Mathematics 306 (2006) 573--579.

\bibitem{Th94}
C.~Thomassen, Every planar graph is {$5$}-choosable, Journal of Combinatorial Theory, Series B 62 (1994) 180--181.

\bibitem{Th95}
C.~Thomassen, 3-List-coloring planar graphs of girth 5, Journal of Combinatorial Theory, Series B 64 (1995) 101--107.

\bibitem{Th97}
C.~Thomassen, Color-critical graphs on a fixed surface, Journal of Combinatorial Theory, Series B 70 (1997) 67--100.

\bibitem{Vi79}
V.G. Vizing,
Coloring the vertices of a graph in prescribed colors, in Diskret. Analiz., no. 29,
Metody Diskret. Anal. v. Teorii Kodov i Shem 101 (1976) 3--10.

\bibitem{Vi76}
V.G. Vizing, Vertex colorings with given colors, Diskret. Analiz. 29 (1976) 3--10.

\bibitem{Vo93}
M. Voigt, List colourings of planar graphs, Discrete Mathematics 120 (1993) 215--219.

\bibitem{Vo95}
M. Voigt, A not 3-choosable planar graph without 3-cycles,
Discrete Mathematics 146 (1995) 325--328.

\bibitem{Vo07}
M. Voigt,
A non-3-choosable planar graph without cycles of length 4 and 5,
Discrete Mathematics 307 (2007) 1013--1015.

\bibitem{VW97}
M. Voigt and B. Wirth,
On 3-colorable non-4-choosable planar graphs, Journal of Graph Theory 24 (1997) 233-235.

\bibitem{WLC10}
Y. Wang, H. Lu and M. Chen,
Planar graphs without cycles of length 4, 5, 8, or 9 are 3-choosable. Discrete Mathematics 310 (2010) 147--158.

\bibitem{WLC11}
Y. Wang, H. Lu and M. Chen,
Planar graphs without cycles of length 4, 7, 8, or 9 are 3-choosable. Discrete Applied Mathematics 159 (2011) 232--239.

\bibitem{WWW08}
D-Q. Wang, Y-P. Wen and K-L. Wang,
A smaller planar graph without 4-, 5-cycles and intersecting triangles that is not 3-choosable,
Information Processing Letters 108 (2008) 87--89.

\end{thebibliography}
\end{document}